
\documentclass[]{interact}

\usepackage{epstopdf}
\usepackage[caption=false]{subfig}
\usepackage{xcolor}

\usepackage[ruled,vlined]{algorithm2e}
\usepackage{url}
\usepackage[numbers,sort&compress]{natbib}
\bibpunct[, ]{[}{]}{,}{n}{,}{,}
\bibliographystyle{abbrvnat}
\theoremstyle{plain}
\newtheorem{theorem}{Theorem}[section]
\newtheorem{lemma}[theorem]{Lemma}


\theoremstyle{definition}
\newtheorem{definition}[theorem]{Definition}
\newtheorem{example}[theorem]{Example}

\theoremstyle{remark}
\newtheorem{remark}{Remark}

\usepackage{Notation}

\begin{document}
	
	
	\title{A Rosenbrock framework for tangential interpolation of port-Hamiltonian
		descriptor systems}
	
	\author{
		\name{Tim Moser\textsuperscript{$\dagger$,$\ddagger$}\thanks{\textsuperscript{$\ddagger$} Corresponding author: Tim Moser; Email:
				tim.moser@tum.de} and Boris Lohmann\textsuperscript{$\dagger$}}
		\affil{\textsuperscript{$\dagger$} 
			Technical University of Munich, TUM School of Engineering and Design,
			Department of
			Engineering Physics and Computation, Boltzmannstr. 15, 85748
			Garching, Germany}
	}
	
	\maketitle
	
	\begin{abstract}
		We present a new structure-preserving model order reduction (MOR) framework for large-scale port-Hamiltonian descriptor systems (pH-DAEs). Our method exploits the structural properties of the Rosenbrock system matrix for this system class and utilizes condensed forms which often arise in applications and reveal the solution behaviour of a system. Provided that the original system has such a form, our method produces reduced-order models (ROMs) of minimal dimension, which tangentially interpolate the original model's transfer function and are guaranteed to be again in pH-DAE form. This allows the ROM to be safely coupled with other dynamical systems when modelling large system networks, which is useful, for instance, in electric circuit simulation.
	\end{abstract}
	
	\begin{keywords}
		port-Hamiltonian systems; differential-algebraic systems; structure-preserving
		model reduction
	\end{keywords}

	\section{Introduction}\label{sec:intro}
	
	The port-Hamiltonian (pH) modelling paradigm provides an energy-based framework for constructing high-fidelity models of complex dynamical systems. The separation between constitutive relations and the interconnection structure enables a modular modelling approach, in which different subsystems are modelled independently and then interconnected via power flows while preserving important physical properties~\cite{Duindam2009}. This flexibility is particularly advantageous when considering interactions between subsystems across different physical domains or time scales~\cite{SchJ14}. Furthermore, network laws that govern these interactions, such as Kirchhoff’s laws in electrical circuits or position and velocity constraints in mechanical systems, may be incorporated via algebraic constraints, leading to pH differential-algebraic equation systems (pH-DAEs)~\cite{MehU22}. 
	
	If the network increases in complexity, generally, the state-space dimension of its associated pH-DAE model does so too, which makes the simulation and design of model-based controllers computationally challenging. Model order reduction (MOR) may be applied to approximate parts of the network with reduced-order models (ROMs) of substantially smaller dimension to decrease the computational costs. However, this comes with two major challenges. On the one hand, the impacts of the algebraic equations on the model dynamics have to be reflected in the ROM as well, but without significantly increasing the reduced state-space dimension. On the other hand, to facilitate the subsequent coupling of the ROM with the remaining network, the MOR method should be \emph{structure-preserving}, meaning that the ROM is again a pH-DAE. Structure-preserving MOR of pH-DAEs is, therefore, an active research field; see  \cite{MosSMV22,MosSMV2022b,HauMM19,BeaGM21} for recent work. 
	
	In this work, we focus on interpolatory MOR techniques, which are particularly suited for the reduction of very large-scale models due to their computational efficiency. While these methods are well-established for various system classes, including general DAE systems (see \cite{Antoulas2020} for a comprehensive overview), this is still only partially the case for pH-DAEs. Tangential interpolation of pH ordinary differential equation systems (pH-ODEs) with no algebraic constraints has been addressed in \cite{Gugercin2009,Gugercin2012,Polyuga2009,Polyuga2010,Polyuga2011} and is well understood. Special pH-DAE cases with algebraic constraints have been considered in \cite{HauMM19,BeaGM21}. However, except for a special representative with differentiation index 2, the proposed methods either do not guarantee that the reduced model is again a pH-DAE system or only reduce the dynamical part of the system equations and retain the algebraic part, which is not the maximal possible reduction. 
	The research question we address in this work is the following: Is there a unifying framework for structure-preserving tangential interpolation that is applicable to all types of linear, time-invariant pH-DAEs? 
	
	A good starting point to answer this question is the condensed forms presented in \cite{Achleitner2021,BeaGM21}. While condensed forms are generally helpful in analyzing the solution behaviour of DAEs, they are also useful for MOR since they allow the impact of algebraic constraints on the system dynamics to be analyzed. In general, the computation of these condensed forms relies on a series of rank conditions which may be sensitive under perturbations; see e.g. \cite{Byers2007}. Fortunately, as demonstrated in \cite{Guducu2021,MehU22} for various physical examples, this can be considered directly in the modelling process such that the resulting model already has (or is close to) condensed form. Since DAEs often include redundant algebraic equations which increase the computational cost for simulation, the task of MOR is to identify a minimal set of equations that describes the dynamical behavior of the system. In 1970, Rosenbrock \cite{Rosenbrock1970} proposed to represent dynamical systems with a polynomial matrix, which is beneficial to compute minimal representations and nowadays known as the \emph{Rosenbrock system matrix}. Interestingly enough, this matrix has not extensively been exploited in the context of MOR. 
	
	In this work, we show that a combination of condensed forms and the system-theoretic concept of the Rosenbrock system matrix, which has a particular structure for pH-DAEs, can be exploited to derive a unifying interpolatory MOR framework that
	\begin{itemize}
		\item[(i)] can be used for linear, time-invariant pH-DAEs with arbitrary differentiation index,
		\item[(ii)] guarantees ROMs in pH-DAE form with a minimal state-space dimension,
		\item[(iii)] works with the original (typically sparse) system matrices and is therefore computationally efficient, and
		\item[(iv)] enables a straightforward adaptation and integration of different interpolatory MOR strategies that have been developed for general dynamical systems.
	\end{itemize}
	
	The paper is organized as follows. In Section \ref{sec:prelims}, we use the concept of the Rosenbrock system matrix to introduce the system class of pH-DAEs and its condensed forms and recapitulate the basics of tangential interpolation. In Section \ref{sec:our_approach}, we propose a new interpolatory MOR framework that exploits the structural properties of pH-DAEs in condensed form and show how this framework naturally generalizes to several extensions proposed for unstructured systems in Section \ref{sec:extensions}. We conclude with an application of our method to two electric circuits and a discussion of the above claims in Sections \ref{sec:num_examples} and \ref{sec:conclusion}.
	
	\section{Preliminaries}\label{sec:prelims}
	We consider linear time-invariant (LTI) systems of the form
	\begin{equation}
		\label{eq:ss_def}
			\begin{split}
				\pE \pdx(t) &= \pA \px(t) + \pB u(t), \quad \px(0) = 0,\\
				\py(t) &= \pC \px(t) + \pD u(t),
			\end{split}	
	\end{equation}
	with state vector $x(t) \in\R^{n}$, inputs $u(t) \in\R^{m}$, outputs $y(t)
	\in\R^{m}$ for all $t \in [0, \infty)$ and constant matrices
	$\pE,\,\pA\in\R^{n\times n}$, $\pB\in\R^{n\times m}$, $\pC\in\R^{m\times n}$ and
	$\pD\in\R^{m\times m}$. Systems with a singular descriptor matrix $\pE$ are referred to as DAE systems, and ODE systems otherwise. In the following, we will
	assume that the pencil $\lambda\pE-\pA$ is \emph{regular}, i.e. $\det(\lambda\pE-\pA)\neq 0$ for some $\lambda\in\C$. We denote the ring of polynomials with coefficients in $\R$ by $\R[s]$ and the set of $n\times m$ matrices with entries in $\R[s]$ by $\R[s]^{n\times m}$.
	
	\subsection{The Rosenbrock system matrix}
	After a Laplace transformation of the state-space equations in \eqref{eq:ss_def}, we obtain the following equations in matrix form:
	\begin{equation} \label{eq:FOM_laplace}
		\Ros(s)\mat{\xi(s)\\ \nu(s)}=\mat{0\\\psi(s)},\quad \text{where } \Ros(s) :=
		\mat{s\pE-\pA&-\pB\\\pC&\pD}\in\R[s]^{(n+m)\times (n+m)}.
	\end{equation}
	The polynomial matrix $\Ros$ is also called the \emph{Rosenbrock system matrix} (in the following: system matrix) \cite{Rosenbrock1974}. Assuming regularity, the linear mapping between inputs $\nu$ and outputs $\psi$ that follows from \eqref{eq:FOM_laplace} is given by the system's transfer function  
	\begin{equation}
		\pHtf(s) := \pC(s\pE-\pA)^{-1}\pB+\pD.
	\end{equation}
	
	Since all transformations of the state-space equations in \eqref{eq:ss_def} can be expressed by operations on $\Ros$, the system matrix proved to be particularly useful for studying the properties of these transformations \cite{Rosenbrock1970}. In this work, we shall focus on transformations which leave the system's transfer function $\pHtf$ unchanged and are summarized by the notion of \emph{strict system equivalence}. 

	\begin{lemma} \cite{Rosenbrock1970} \label{lem:sse}
		Let $X(s)\in\R[s]^{m\times n}$, $Y(s)\in\R[s]^{n\times m}$ and define unimodular matrices $L(s),M(s)\in\R[s]^{n\times n}$, i.e. their determinants are nonzero constants. Suppose that two system matrices $\Ros$ and $\tRos$ are related by the transformation
		\begin{equation}
			\tRos(s) =  \Tone(s)\Ros(s)\Ttwo(s)=\mat{L(s)&0\\X(s)&I_\dimu} \Ros(s) \mat{M(s) & Y(s) \\ 0 & I_\dimu}.
		\end{equation}
		Then we shall say that $\Ros$ and $\tRos$ are related by \emph{strict system equivalence (s.s.e.)}. The two system matrices give rise to the same transfer function, i.e. ${\pHtf(s) = \pHtft(s)}$ for all ${s\in\C}$.		
	\end{lemma}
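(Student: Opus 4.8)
The plan is to identify $\pHtf(s)=\pC(s\pE-\pA)^{-1}\pB+\pD$ with the Schur complement of the leading block $s\pE-\pA$ of the system matrix $\Ros(s)$, and then to verify that this Schur complement is unchanged by the transformation $\Ros\mapsto\Tone\Ros\Ttwo$. All computations are carried out over the field $\R(s)$ of rational functions --- equivalently, for every $s\in\C$ outside the finite set of roots of $\det(s\pE-\pA)$ --- which is harmless, since transfer functions are rational and agreement on a cofinite set forces agreement wherever both sides are defined. Regularity of $\lambda\pE-\pA$ makes $s\pE-\pA$ invertible there, so, abbreviating $\Ros=\mat{P&Q\\R&S}$ with $P:=s\pE-\pA$, $Q:=-\pB$, $R:=\pC$, $S:=\pD$, we have $\pHtf=S-RP^{-1}Q$.

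First I would expand $\tRos=\Tone\Ros\Ttwo$ blockwise to obtain $\tilde P=LPM$, $\tilde Q=L(PY+Q)$, $\tilde R=(XP+R)M$ and $\tilde S=(XP+R)Y+XQ+S$. Since $L,M$ are unimodular, $\det\tilde P=(\det L)(\det M)\det(s\pE-\pA)$ is a nonzero constant multiple of $\det(s\pE-\pA)$, so $\tilde P$ is invertible precisely where $P$ is, with $\tilde P^{-1}=M^{-1}P^{-1}L^{-1}$. Substituting these into $\pHtft=\tilde S-\tilde R\tilde P^{-1}\tilde Q$, the trailing $M$ of $\tilde R$ cancels $M^{-1}$ and the leading $L$ of $\tilde Q$ cancels $L^{-1}$, reducing the expression to $(XP+R)Y+XQ+S-(XP+R)(Y+P^{-1}Q)$; here the $Y$-terms cancel outright and $XQ$ cancels the contribution $(XP)P^{-1}Q$, leaving $\pHtft=S-RP^{-1}Q=\pHtf$. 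Equivalently, one may start from the block factorization $\Ros=\mat{I&0\\RP^{-1}&I}\mat{P&0\\0&S-RP^{-1}Q}\mat{I&P^{-1}Q\\0&I}$, which displays $\pHtf$ as the $(2,2)$-block of the middle factor; multiplying by $\Tone$ on the left and $\Ttwo$ on the right again yields a factorization of $\tRos$ of the same block-triangular shape whose middle factor still carries the entry $S-RP^{-1}Q$, so the Schur complement $\pHtft$ of $\tilde P$ in $\tRos$ is again $S-RP^{-1}Q=\pHtf$.

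An alternative, perhaps more conceptual, argument operates directly on the Laplace-domain relation \eqref{eq:FOM_laplace}: using $\tRos=\Tone\Ros\Ttwo$ together with the fact that $\Tone$ is block lower-triangular and $\Ttwo$ block upper-triangular, each with identity $(2,2)$-block, one can pass between solutions $(\xi,\nu,\psi)$ of that relation for $\Ros$ and solutions $(\tilde\xi,\tilde\nu,\tilde\psi)$ for $\tRos$ via $\nu=\tilde\nu$, $\psi=\tilde\psi$ and $\xi=M(s)\tilde\xi+Y(s)\tilde\nu$, whence the input-output maps coincide. I do not anticipate a genuine obstacle; the only points that need care are justifying each inverse through regularity and unimodularity, so that the algebra genuinely takes place over $\R(s)$, and invoking the identity principle for rational functions to upgrade the equality from a cofinite subset of $\C$ to the statement for all $s\in\C$.
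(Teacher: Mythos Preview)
Your argument is correct. The Schur-complement identification $\pHtf(s)=S-RP^{-1}Q$ with $P=s\pE-\pA$, $Q=-\pB$, $R=\pC$, $S=\pD$ is exactly the right viewpoint, and your block computation of $\tilde S-\tilde R\tilde P^{-1}\tilde Q$ is clean and accurate; the cancellations of $M$, $L$, the $Y$-terms, and $XQ$ all go through as you describe. Your remark that the calculation lives in $\R(s)$ and then extends to all admissible $s$ by the identity principle for rational functions is the appropriate way to handle the ``for all $s\in\C$'' clause, and the unimodularity of $L,M$ is precisely what guarantees $\tilde P$ is invertible wherever $P$ is.

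As for comparison with the paper: the paper does not actually supply a proof here but simply refers the reader to Section~3.1 of Rosenbrock's monograph. Your write-up therefore goes well beyond what the paper offers, giving a self-contained verification rather than a citation. The Schur-complement route you take is in fact the standard modern proof and is essentially what one finds (in different notation) in Rosenbrock's original treatment, so there is no divergence in spirit. Your second, ``conceptual'' argument via the Laplace-domain relation \eqref{eq:FOM_laplace} and the bijection $\xi=M\tilde\xi+Y\tilde\nu$ is a nice complement and makes transparent why the block-triangular structure of $\Tone,\Ttwo$ with identity $(2,2)$-blocks is exactly what is needed to leave the input--output map untouched.
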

	\begin{proof}
		For a proof, we refer the reader to \cite[Section 3.1]{Rosenbrock1970}.
	\end{proof}	
	The benefits of representing a dynamical system with its system matrix are by no means restricted to system transformations; it may also be useful in the context of model reduction.	
	\subsection{Interpolatory model reduction}
	The goal of MOR is to find a reduced-order model
	\begin{equation}
		\label{eq:LTI_ROM}
			\begin{split}
				\rE \rdx(t) &= \rA \rx(t) + \rB u(t), \quad \rx(0) = 0,\\
				\ry(t) &= \rC \rx(t) + \rD u(t),
			\end{split}	
	\end{equation}
	with reduced state vector $\rx(t) \in\R^{r}$ and an associated transfer function $\pHtfr$ such that $r\ll\dimx$ and $\py\approx\ry$ for certain $u$.
	In projection-based MOR, these models are created by means of Petrov-Galerkin projections. Here, we define two matrices $U,\,V\in\R^{\dimx\times \dimxr}$ with full column rank. The original state trajectory $x(t)$ is approximated on the column space of $V$, i.e. $x(t)\approx V\rx(t)$ for all $t \in [0, \infty)$. The reduced system matrix is obtained by the following operation on the original system matrix $\Ros$:
	\begin{equation}\label{eq:projective_MOR}
		\rRos(s) =  \mat{U^\T&0\\0&I_\dimu} \Ros \mat{V & 0 \\ 0 & I_\dimu} = \mat{sU^\T\pE V-U^\T\pA V&-U^\T\pB\\\pC V&\pD} = \mat{s\rE-\rA&-\rB\\\rC&\rD}.
	\end{equation}
	The matrices $U,V$ may be chosen for different purposes, for example, to enforce the preservation of certain system-theoretic or structural properties of the original model. In interpolatory MOR methods, they are used to enforce interpolation conditions between the original and reduced transfer function. In particular, the column space of $V$ may be chosen such that the reduced transfer function $\pHtfr$ tangentially interpolates the original transfer function $H$ at a set of interpolation points or \emph{shifts} ${\{\sigma_1,\ldots,\sigma_r\} \in\C}$: 
	\begin{equation} \label{eq:interp_cond}
		H(\sigma_i)b_i=H_r(\sigma_i)b_i,\qquad i=1,\dots, r,
	\end{equation}
	where $b_i\in\C^{\dimu}$ denotes the associated (right) tangential direction. For the sake of simplicity, we assume throughout this work that each shift is distinct while an extension to multiple shifts is straightforward, and we refer the interested reader to \cite{Antoulas2020} for details. In all cases, we require ${\sigma_i\pE-\pA}$ as well as ${\sigma_i\rE-\rA}$ to be nonsingular for all ${i=1,\dots,r}$, and that the sets of shifts and associated tangential directions are both closed under complex conjugation. Then, the interpolation conditions in \eqref{eq:interp_cond} may be enforced by choosing
	\begin{equation} \label{eq:Inp_Krylov_PHS}
		(\sigma_i\pE-\pA)^{-1}\pB b_i \in \range(V),\qquad i=1,\dots, r,
	\end{equation}	
	where $\range(V)$ denotes the associated column space of $V$. Similarly, $U$ may be used to enforce additional (left) tangential interpolation conditions. The approximation quality of a ROM is typically assessed by computing the $\hinf$ or $\htwo$ norm of the error function $\pHtf-\pHtfr$. These are defined in the Hardy spaces $\rhinf^{m \times m}$ ( $\mathcal{RH}_2^{m \times m}$) of all proper (strictly proper) real-rational $m \times m$ matrices without poles in the closed complex right half-plane (see, e.\,g., \cite{ZhoDG96} for  details).
	\begin{remark}\label{rem:dae_challenge}
		 Note that while the conditions in \eqref{eq:Inp_Krylov_PHS} hold regardless of whether $\pE$ has full rank or not, as long as $\sigma_i\pE-\pA$ and $\sigma_i\rE-\rA$ are nonsingular for all $i=1,\dots,r$, the interpolation of DAE systems poses additional challenges. Unlike in the ODE case, where $\pHtf-\pHtfr$ is guaranteed to be strictly proper since ${\lim_{s\to\infty}\pHtf(s)-\pHtfr(s)=\pD-\rD = 0}$, this is not necessarily the case for DAE systems. In fact, as we will also see in Section \ref{sec:our_approach}, the algebraic constraints may even lead to \emph{improper} transfer functions with $\lim_{s\to\infty}H(s)=\infty$. It is, therefore, crucial to analyze the impact of the algebraic constraints on $\pHtf$ because otherwise, $\pHtf-\pHtfr$ may grow unboundedly large for $s\to\infty$, and the error norms are no longer defined. For an overview of how to deal with this challenge for general LTI systems, the reader is referred to \cite[Chapter 9]{Antoulas2020}.
	\end{remark} 
	
	\subsection{Port-Hamiltonian descriptor systems}
	The system class of linear port-Hamiltonian descriptor systems with quadratic Hamiltonian was introduced in \cite{BeaMXZ18}. In this work, we focus on the subclass of constant-coefficient pH-DAEs, which can be characterized by a special structure of the associated system matrix.	
	\begin{definition}\label{def:ph}
		Consider a regular LTI system of the form
		\begin{equation}\label{eq:FOM}
		\pHsys: 
		\begin{cases}
				\begin{aligned}
				\wt{E} \dot{\wt{x}}(t) &= (\wt{J}-\wt{R})\wt{x}(t) +(\wt{G}-\wt{P})u(t),\quad \px(0) = 0,\\
				y(t) &= {(\wt{G}+\wt{P})}^\T \wt{x}(t) +(\wt{S}+\wt{N})u(t),
			\end{aligned}
		\end{cases}
		\end{equation}
		where $\pE,\, \pJ,\, \pR \in \R^{\dimx \times \dimx}$, $\pG,\, \pP \in \R^{\dimx\times \dimu}$, $\pS,\, \pN \in \R^{\dimu\times \dimu}$. We call the system a pH-DAE system if its system matrix may be decomposed into the following sum of symmetric and skew-symmetric parts
		\begin{equation} \label{eq:ros_ph}
			\Ros(s) = s\underbrace{\mat{E&0\\0&0}}_{=:\,\mathcal{E}} + \underbrace{\mat{-\pJ & -\pG \\ \pG^\T & \pN}}_{=:\,\Gamma} + \underbrace{\mat{\pR & \pP \\ \pP^\T & \pS}}_{=:\,W},
		\end{equation}
		such that
		\begin{itemize}
			\item[(i)] the \emph{structure matrix} $\Gamma$ is skew-symmetric, i.e. $\Gamma = -\Gamma^\T$,
			\item[(ii)] the \emph{dissipation matrix} $W$ is positive semi-definite, i.e. $W = W^\T \geq 0$,			
			\item[(iii)] the \emph{extended descriptor matrix} $\mathcal{E}$ is positive semi-definite, i.e. $\mathcal{E} = \mathcal{E}^\T \geq 0$.
		\end{itemize}
	The system has an associated quadratic Hamiltonian $\mathcal{H}(x):=\frac{1}{2}\px^\T\pE\px$ and transfer function 
	\begin{equation*}
		H(s):=(G+P)^\T(sE-(J-R))^{-1}(G-P)+S+N.
	\end{equation*}
	\end{definition}
	
	Note that the definition proposed in \cite{BeaMXZ18} appears to be more general since it also allows the representation of systems governed by a quadratic Hamiltonian of the form ${\mathcal{H}(x) = \frac{1}{2}x^\T Q^\T E x}$ with $Q\in\R^{\dimx\times \dimx}$ and $\pQ^\T \pE = \pE^\T\pQ \geq 0$. However, our definition does not impose any additional restrictions since it has been shown, e.g., in~\cite{MehU22} that every pH-DAE, as defined in \cite{BeaMXZ18}, may be reformulated to have the form in \eqref{eq:FOM}.
	
	Moreover, it was shown in \cite{Achleitner2021,BeaGM21} that every pH-DAE may be transformed into \emph{staircase form}, which reveals, for example, the system's differentiation index. Before we proceed, let us highlight that the computation of this condensed form may require several subsequent full rank decompositions, which are sensitive to perturbations. Fortunately, due to the structural properties of the system matrix, the number of required decompositions is limited to three in contrast to general (unstructured) DAE systems \cite{Achleitner2021, MehU22}. Moreover, if this condensed form is directly considered during modelling, fewer steps are required, as discussed in \cite{Guducu2021, MehU22}. In some practical cases, for example, in the modelling of electric circuits, the staircase form even naturally arises or can be enforced by simple structure-preserving permutations of the system equations, as illustrated in Section \ref{sec:num_examples}.

	\begin{lemma}\label{lem:staircase} \textup{\cite{Achleitner2021,BeaGM21}}
		A regular pH-DAE system is in \emph{staircase form} if it has a partitioned state vector $\px(t) = {\left[\px_1(t)^\T,\px_2(t)^\T,\px_3(t)^\T,\px_4(t)^\T\right]}^\T$, where ${\px_j(t)\in\R^{\dimx_j}, \dimx_j\in \mathbb{N}_0}$ for all ${j=1,\ldots,4}$ such that
		\begin{alignat}{2}
			\pE &:= \mat{\pE_{11} & 0 & 0 & 0 \\ 0 & \pE_{22} & 0 & 0 \\ 0 & 0 & 0 & 0 \\ 0 & 0 & 0 & 0}, \quad	\pJ &&:= \mat{\pJ_{11} & \pJ_{12} & \pJ_{13} & \pJ_{14} \\  \pJ_{21} & \pJ_{22} & \pJ_{23} & 0 \\ \pJ_{31} & \pJ_{32} & \pJ_{33} & 0 \\ \pJ_{41} & 0 & 0 & 0}, \\ \pG &:= \mat{\pG_1 \\\pG_2 \\ \pG_3 \\ \pG_4}, \pP := \mat{\pP_1 \\\pP_2 \\ \pP_3 \\ 0}, \quad \pR &&:= \mat{\pR_{11} & \pR_{12} & \pR_{13} & 0 \\  \pR_{21} & \pR_{22} &\pR_{23} & 0 \\ \pR_{31} & \pR_{32} & \pR_{33} & 0 \\ 0 & 0 & 0 & 0}, 
		\end{alignat}
		where $\pE_{11},\,\pE_{22}$ are positive definite, and the matrices $\pJ_{41}$ and $\pJ_{33}-\pR_{33}$ are invertible (if the blocks are nonempty). The differentiation index $\ind$ of the uncontrolled system satisfies
		\begin{equation*}
			\ind = \begin{cases}
				0 & \textit{if and only if } \dimx_1 = \dimx_4 = 0 \textit{ and } \dimx_3 = 0, \\ 1 & \textit{if and only if } \dimx_1=\dimx_4=0 \textit{ and } \dimx_3 > 0, \\ 2 & \textit{if and only if } \dimx_1 = \dimx_4 > 0.
			\end{cases} 
		\end{equation*}
	\end{lemma}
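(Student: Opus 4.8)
To prove the index characterisation (the block partition itself merely \emph{defines} the staircase form), the plan is to bring the regular pencil $\lambda\mathcal{E}-(J-R)$ of the uncontrolled system $\mathcal{E}\dot{x}(t)=(J-R)x(t)$ into Weierstrass canonical form $(I\oplus N,\,C\oplus I)$ with $N$ nilpotent, and then read off the differentiation index, which equals the nilpotency index of $N$ (index $0$ meaning $N$ absent, i.e.\ $\mathcal{E}$ nonsingular). This is legitimate because the differentiation index is unchanged when $(\mathcal{E},\,J-R)$ is replaced by $(S\mathcal{E}T,\,S(J-R)T)$ for nonsingular $S,T$. The staircase structure is tailored to make this reduction explicit: $E_{11},E_{22}$ are positive definite and $J_{33}-R_{33},\,J_{41}$ are invertible, and skew-symmetry $J=-J^{\T}$ gives $J_{14}=-J_{41}^{\T}$, so $J_{14}$ is invertible as well and in particular $n_1=n_4$. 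With $n_1=n_4$ the three conditions on $(n_1,n_3,n_4)$ are mutually exclusive and jointly exhaustive, so it is enough to evaluate the index in each.

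\textbf{Indices $0$ and $1$.} First I would observe that the third and fourth block rows of $\mathcal{E}$ vanish while $E_{11},E_{22}$ are invertible, so $\mathcal{E}$ is nonsingular precisely when $n_3=n_4=0$, i.e.\ (using $n_1=n_4$) precisely when $n_1=n_4=n_3=0$; then the system is the explicit ODE $E_{22}\dot{x}_2=(J_{22}-R_{22})x_2$, of index $0$. If instead $n_1=n_4=0$ but $n_3>0$, the system consists of an $x_2$-block and the constraint $0=(J_{32}-R_{32})x_2+(J_{33}-R_{33})x_3$; solving the constraint for $x_3$ via $(J_{33}-R_{33})^{-1}$ and performing the corresponding row operations decouples it into an $x_2$-ODE and the nilpotent block $\lambda\cdot 0-(J_{33}-R_{33})$, i.e.\ $N=0$, of nilpotency index $1$. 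As $\mathcal{E}$ is then singular, the index is exactly $1$.

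\textbf{Index $2$.} For $n_1=n_4>0$, the key point is the $(x_1,x_4)$-subpencil
\[
\lambda\mat{E_{11} & 0\\ 0 & 0}-\mat{J_{11}-R_{11} & J_{14}\\ J_{41} & 0}.
\]
Its constant part is nonsingular (determinant $\pm\det(J_{14})\det(J_{41})\neq 0$), so premultiplying by its inverse turns this subpencil into $\lambda\widehat{E}-I$ with $\widehat{E}=\mat{0 & 0\\ J_{14}^{-1}E_{11} & 0}$, which satisfies $\widehat{E}^{2}=0$ but $\widehat{E}\neq 0$ since $J_{14}^{-1}E_{11}$ is invertible; hence this block has nilpotency index exactly $2$. (Intuitively, $0=J_{41}x_1$ and its derivative force $x_1\equiv 0$, after which the $x_1$-equation determines $x_4$ algebraically, so two differentiations are needed.) The remaining $x_3$- and $x_2$-blocks contribute nilpotency at most $1$, so the overall differentiation index is $2$. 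Finally, the ``only if'' directions follow by exhaustion, since the three cases yield three distinct index values.

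\textbf{Where the difficulty lies.} The per-block nilpotency computations are routine; the delicate step is the \emph{decoupling} itself — producing the nonsingular state change and left multiplier that eliminate the off-diagonal coupling (the blocks $J_{12},J_{13},R_{12},R_{13},\dots$ and the first block column $J_{21},R_{21},J_{31},R_{31},J_{41}$) while keeping $E_{11},E_{22},J_{33}-R_{33},J_{41}$ invertible, so that the pencil genuinely splits as claimed. This amounts to re-deriving the staircase reduction of \cite{Achleitner2021,BeaGM21}, so the most economical route is to cite those works for the existence of the decoupling and then read off the nilpotency index as above.
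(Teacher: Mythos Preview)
The paper does not give its own proof of this lemma; it is stated with a citation to \cite{Achleitner2021,BeaGM21} and used as a black box. So there is no in-paper argument to compare against.

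Your sketch is correct in outline and is essentially the standard argument one finds in the cited references: infer $n_1=n_4$ from $J_{14}=-J_{41}^{\T}$ invertible, then read off the nilpotency index of the infinite part block by block. The index-$0$ and index-$1$ cases are genuinely elementary as you say. For the index-$2$ case, your computation on the $(x_1,x_4)$-subpencil is the right one, and you are honest that the substantive work is the decoupling step that isolates this subpencil from the $x_2$- and $x_3$-blocks without raising the nilpotency index elsewhere. That step is exactly what the staircase construction in \cite{Achleitner2021,BeaGM21} provides, so your suggestion to cite those works for the decoupling and then carry out the nilpotency bookkeeping explicitly is the appropriate level of detail. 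One minor point: in the index-$2$ case you should also allow $n_3>0$ (the lemma does not exclude it), but your remark that the $x_3$-block contributes nilpotency at most $1$ already covers this, provided the decoupling goes through.
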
	
	As initially stated, it is beneficial to preserve the structural properties of the original pH-DAE model during MOR. Structure-preserving MOR methods, therefore, search for a reduced pH-DAE
	\begin{equation}\label{eq:ROM}
		\begin{aligned}
			\rE \rdx(t) = (\rJ-\rR)\rx(t) +(\rG-\rP)u(t), \\
			\ry(t) = {(\rG+\rP)}^\T \rx(t) +(\rS+\rN)u(t),
		\end{aligned}
	\end{equation}
	with $\rx(t)\in\R^\dimxr$, $r\ll n$ that fulfils the pH structural constraints, i.e. the associated system matrix may be decomposed as in Definition \ref{def:ph} such that
	\begin{equation} \label{eq:RB_props}
		\rRos(s) = s\red{\mathcal{E}}+\red{\Gamma}+\red{W},
	\end{equation}
	with symmetric positive semi-definite $\red{\mathcal{E}},\red{W}$ and skew-symmetric $\red{\Gamma}$. Note that the system matrix has also recently been used to derive a symplectic MOR method for LTI pH-ODEs without feedthrough in \cite{Zwart22}.
	
	\section{Our approach} \label{sec:our_approach}
	In the following, we demonstrate how the concepts of the presented staircase form and the system matrix may be unified to derive a framework for tangential interpolation of pH-DAEs with an arbitrary differentiation index. For this, we proceed in three steps. First, we apply a transformation under s.s.e. to the original system matrix $\Ros$ that enables us to decompose the original transfer function into proper parts and improper parts that may originate from algebraic constraints. Since all improper parts have to be preserved in the ROM exactly to keep the error $\pHtf-\pHtfr$ bounded (see Remark~\ref{rem:dae_challenge}), we propose a new method to efficiently reduce \emph{only} the proper part in the second step. Third, we show how to reattach the original improper part to the reduced proper transfer function to construct a minimal pH-DAE representation of the ROM in staircase form.
	
	\subsection{System matrix decomposition}\label{subsec:trafo}
	Let $\Ros$ denote the system matrix of a full-order pH-DAE system with transfer function $\pHtf$ in staircase form as in Lemma \ref{lem:staircase}. For the sake of notational simplicity, we use ${A=J-R}$, ${B=G-P}$, ${C=(G+P)^\T}$ and ${D=S+N}$ which are partitioned as in Lemma \ref{lem:staircase}: $A_{11}\in\R^{n_1\times n_1}$, for example, denotes the upper left block of ${J-R}$.  We define the transformation matrices ${\Tone,\Ttwo\in\R^{(\dimx+\dimu)\times(\dimx+\dimu)}}$ with
	\begin{align*}
		\Tone &:= \mat{L&0\\X&I_m} = \text{\footnotesize $\mat{I_{\dimx_1} & 0 & -\pA_{13}\pA_{33}^{-1} & 0 & 0 \\
				0 & I_{\dimx_2} & -\pA_{23}\pA_{33}^{-1} & -(\pA_{21}-\pA_{23}\pA_{33}^{-1}\pA_{31})\pA_{41}^{-1} & 0 \\ 
				0 & 0 & \pA_{33}^{-1} & 0 & 0 \\
				0 & 0 & 0 & A_{41}^{-1} & 0 \\
				\pC_4\pA_{14}^{-1} & 0 & (\pC_3-\pC_4\pA_{14}^{-1}\pA_{13})\pA_{33}^{-1} & 0 & I_{\dimu}}$}, \\
		\Ttwo &:= \mat{M&Y\\0&I_m} =  \text{\footnotesize $\mat{I_{\dimx_1} & 0 & 0 & 0 & -\pA_{41}^{-1}\pB_4 \\
				0 & I_{\dimx_2} & 0 & 0 & 0 \\ 
				-\pA_{33}^{-1}\pA_{31} & -\pA_{33}^{-1}\pA_{32} & I_{n_3} & 0 & -\pA_{33}^{-1}(\pB_3-\pA_{31}\pA_{41}^{-1}\pB_4) \\
				0 & \pA_{14}^{-1}(-\pA_{12}+\pA_{13}\pA_{33}^{-1}\pA_{32}) & 0 & \pA_{14}^{-1} & 0 \\
				0 & 0 & 0 & 0 & I_{\dimu}}$}.
	\end{align*}	
	It is apparent that $\Tone$ and $\Ttwo$ satisfy the conditions in Lemma \ref{lem:sse} since the determinants of $\pA_{33}, \pA_{14}$ and $\pA_{41}$ are constant and nonzero. Note that our definitions of $L$ and $M$ share similarities with the state-space transformation presented in \cite[Lemma 6]{Achleitner2021} for autonomous semi-dissipative Hamiltonian DAEs. We obtain a transformed s.s.e. system matrix 
	\begin{equation} \label{eq:sse_trafo}
		\tRos(s) = \Tone\Ros(s)\Ttwo = \mat{s\tE_{11}-\tA_{11} & 0 & 0 & -I_{\dimx_1} & -\tB_1 \\
			0 & \boldsymbol{s\prE-\prA} & 0 & 0 & -\boldsymbol{\prB} \\ 
			0 & 0 & -I_{\dimx_3} & 0 & 0 \\
			-I_{\dimx_1} & 0 & 0 & 0 & 0 \\
			\tC_1 & \boldsymbol{\prC} & 0 & 0 & \boldsymbol{\prD+s\Dinf}},
	\end{equation}
	with nonsingular $\tE_{11},\,\prE$. The second and fifth block column and row entries, respectively, are highlighted since these are the only parts that contribute to the transformed transfer function $\pHtft$:
	\begin{align} \label{eq:pHtf_decomp}
		\pHtft &= \tC(s\tE-\tA)^{-1}\tB + \prD + s\Dinf \\
		&= \text{ \footnotesize $\mat{\tC_1 & \prC & 0 & 0}\mat{0 & 0 & 0 & -I_{n_1} \\ 0 & (s\prE-\prA)^{-1} & 0 & 0 \\ 0 & 0 & -I_{n_3} & 0 \\ -I_{n_1} & 0 & 0 & -(s\tE_{11}-\tA_{11})}\mat{\tB_1 \\ \prB \\ 0 \\ 0} + \prD + s\Dinf$} \\
		&= \prC(s\prE-\prA)^{-1} \prB + \prD + s\Dinf
	\end{align}
	
	Consequently, the transfer function of every pH-DAE may be represented as the sum of the transfer function of a \emph{proper} ODE system with dimension $n_2$ and an additional linear, improper term. Moreover, as shown in~\cite[Theorem 1]{MosSMV2022b}, the proper subsystem again satisfies the pH structural constraints in Definition \ref{def:ph}. It, therefore, admits a pH-ODE representation, which can be found by using its system matrix.
	\begin{lemma}\label{lem:prop_decomp}
		A pH-ODE representation of the proper subsystem with transfer function $\pHtfp = \prC(s\prE-\prA)^{-1} \prB +\prD$ can be computed by simply decomposing its system matrix
		\begin{equation*}
			\prRos(s) := \mat{s\prE-\prA&-\prB\\\prC&\prD}, 
		\end{equation*}		
		into symmetric and skew-symmetric parts, respectively.
	\end{lemma}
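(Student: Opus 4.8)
The plan is to derive the statement from \cite[Theorem~1]{MosSMV2022b} together with the uniqueness of the additive decomposition of a square matrix into a symmetric and a skew-symmetric summand. First I would make the proper realization $(\prE,\prA,\prB,\prC,\prD)$ explicit. Tracing the $(2,2)$ state block through $\tRos=\Tone\Ros(s)\Ttwo$ and using that the only nonzero blocks of $\pE$ lie in the $(1,1)$ and $(2,2)$ positions shows $\prE=\pE_{22}$, which by Lemma~\ref{lem:staircase} is symmetric and positive definite; the matrices $\prA,\prB,\prC$ and $\prD$ are read off from the highlighted blocks of \eqref{eq:sse_trafo}. The computation culminating in \eqref{eq:pHtf_decomp} already certifies that this quintuple realizes $\pHtfp$, and since $\prE$ is invertible the realization is an ODE; it therefore only remains to exhibit a port-Hamiltonian splitting of $\prRos(s)$.

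For the latter I would invoke \cite[Theorem~1]{MosSMV2022b}, by which the extracted proper subsystem again satisfies conditions (i)--(iii) of Definition~\ref{def:ph}; equivalently, there exist matrices $J_{\mathrm p},R_{\mathrm p},G_{\mathrm p},P_{\mathrm p},S_{\mathrm p},N_{\mathrm p}$ with $\prA=J_{\mathrm p}-R_{\mathrm p}$, $\prB=G_{\mathrm p}-P_{\mathrm p}$, $\prC=(G_{\mathrm p}+P_{\mathrm p})^\T$, $\prD=S_{\mathrm p}+N_{\mathrm p}$ whose associated structure matrix $\Gamma_{\mathrm p}$ is skew-symmetric and whose dissipation matrix $W_{\mathrm p}$ is symmetric positive semi-definite. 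Since, in addition, the coefficient of $s$ in $\prRos(s)$ is the symmetric positive semi-definite matrix $\mathcal E_{\mathrm p}$ carrying $\prE$ in its leading block and zeros otherwise, we obtain $\prRos(s)=s\mathcal E_{\mathrm p}+\Gamma_{\mathrm p}+W_{\mathrm p}$ with $\mathcal E_{\mathrm p}=\mathcal E_{\mathrm p}^\T\ge 0$, $\Gamma_{\mathrm p}=-\Gamma_{\mathrm p}^\T$ and $W_{\mathrm p}=W_{\mathrm p}^\T\ge 0$. As the decomposition of the constant part of $\prRos(s)$ into a symmetric and a skew-symmetric summand is unique, this splitting is exactly the one produced by the procedure in Definition~\ref{def:ph}, so reading the blocks of $\mathcal E_{\mathrm p}$, $\Gamma_{\mathrm p}$ and $W_{\mathrm p}$ off yields a pH-ODE representation of $\pHtfp$, as claimed.

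I expect the substantive work to sit in the step taken from \cite[Theorem~1]{MosSMV2022b}, namely the positive semi-definiteness of $W_{\mathrm p}$. A self-contained argument would trace the constant part of $\Tone\Ros(s)\Ttwo$ at the $(\px_2,u)$ position and exploit that, owing to the pH structure, the elimination of the algebraic and index variables $\px_1,\px_3,\px_4$ can be organized as a sequence of congruence transformations and Schur complements applied to the positive semi-definite dissipation matrix $W$ (the block $\pA_{33}=\pJ_{33}-\pR_{33}$ being invertible, and $\px_1,\px_4$ carrying no $\pR$- or $\pP$-entries in the staircase form, so that their elimination essentially only produces the improper term $s\Dinf$). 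Since congruences and Schur complements of this type preserve positive semi-definiteness, $W_{\mathrm p}\ge 0$ follows. Carrying out this bookkeeping over the staircase blocks while keeping the sign conventions of Definition~\ref{def:ph} straight is the delicate part.
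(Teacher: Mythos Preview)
Your proposal is correct and essentially mirrors the paper's own proof: both establish $\prE=\pE_{22}>0$ to get the ODE property and then rely on \cite[Theorem~1]{MosSMV2022b}, noting that $\prGamma+\prW$ arises from $\Gamma+W$ via permutations, congruence-like transformations and Schur complements that preserve positive semi-definiteness of the symmetric part. Your explicit appeal to the uniqueness of the symmetric/skew-symmetric splitting is a nice clarification the paper leaves implicit, but the overall strategy is the same.
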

	\begin{proof}
		Decomposing $\prRos$ yields 
		\begin{equation*}
			\prRos(s) = s\mat{\prE&0\\0&0} + \prGamma + \prW,
		\end{equation*}	
		with
		\begin{align*}
			\prGamma  &= \mat{-\prJ & -\prG \\ \prG^\T & \prN} = \frac{1}{2} \mat{-\prA+\prA^\T & -\prB-\prC^\T\\\prC+\prB^\T&\prD-\prD^\T},\\
			\prW &= \mat{\prR & \prP \\ \prP^\T & \prS} = \frac{1}{2} \mat{-\prA-\prA^\T&-\prB+\prC^\T\\\prC-\prB^\T&\prD+\prD^\T}. \\
		\end{align*}
		The fact that the system is an ODE system follows directly from ${\prE = \pE_{22}>0}$. This also proves condition (i) in Definition~\ref{def:ph}. In~\cite[Theorem 1]{MosSMV2022b}, it was shown that the sum ${\prGamma+\prW}$ may be obtained from a series of transformations of the original ${\Gamma+W}$. These include permutations, Schur complement constructions and congruence-like transformations, which all preserve the positive semi-definiteness of the symmetric part. Therefore, $\prGamma, \prW$ fulfil the pH structural constraints (ii) and (iii) in Definition~\ref{def:ph}, which completes the proof.
	\end{proof}

	We highlight that the simplicity of this result is a direct consequence of the staircase form and the pH structural constraints. For general DAE systems, this system decomposition approach generally requires the computation of spectral projectors onto the left and right deflating subspaces of the pencil $\lambda\pE -\pA$ corresponding to the finite eigenvalues, which are numerically challenging to compute in the large-scale setting \cite{Gugercin2013}. In some applications, such as fluid flow problems or electric circuit simulation where the matrices $\pE$ and $\pA$ have a special block structure, the computation of spectral projectors can be done more efficiently or even circumvented, see, e.g., \cite{MehS05,Gugercin2013}. However, the proposed interpolatory MOR approaches for general (unstructured) DAE systems \cite{Gugercin2013,Antoulas2020} vary for different differentiation indices, and their adaptations to pH-DAE systems proposed in \cite{BeaGM21, HauMM19} do not always guarantee that the ROM is again in pH form. In the following, we show how the results in this section enable the construction of a general MOR approach that works irrespective of the original system's differentiation index and guarantees to produce minimal ROMs in pH form.
	
	\subsection{Tangential interpolation}\label{subsec:mor}
	
	As discussed in Remark \ref{rem:dae_challenge}, the ROM has to match the improper part $s\Dinf$ exactly because otherwise,  $\pHtf-\pHtfr \notin \rhinf^{m \times m}$. We will therefore set this part aside for now and focus on the reduction of the proper part. Since the proper subsystem has only dimension $n_2$, a natural approach would be to reduce the proper system matrix $\prRos$ directly. According to~\eqref{eq:Inp_Krylov_PHS}, this approach requires the computation of solutions $v_i\in\C^{n_2}$ of
	\begin{equation}\label{eq:proper_Krylov}
		(\sigma_i\prE-\prA)v_i = \prB b_i,
	\end{equation} 
	for all $i=1,\,\ldots\,r$. From \eqref{eq:sse_trafo}, we derive
	\begin{align*}
		\prE = \pE_{22}, \qquad \prA = \pA_{22}-\pA_{23}\pA_{33}^{-1}\pA_{32}.
	\end{align*}
	For index-1 pH-DAEs ($n_3>0$) with nonzero $\pA_{23}\pA_{33}^{-1}\pA_{32}$, the matrix $\prA$ might be dense, and therefore, the solutions in \eqref{eq:proper_Krylov} may be more expensive than for the original system with matrices $E$, $A$ and $B$. This is illustrated by the following example.
	\begin{example}\label{ex:sparsity}
	Consider a pH-DAE with differentiation index $\nu = 1$ in staircase form. The system has dimension $\dimx=10^4$ with $\dimx_2=\dimx_3=\frac{n}{2}$ and one input-output pair. As shown by the sparsity patterns in Figure \ref{fig:sparsity} (a)-(b), the matrices $\pE$ and $\pA$ have a very simple structure with only few nonzero entries (depicted in blue). However, the matrix $\prA$ of the proper subsystem is dense, as shown in Figure \ref{fig:sparsity} (d). 
	\begin{figure}[tbh]%
		\centering
			\subfloat[]{\includegraphics[width=0.45\textwidth]{./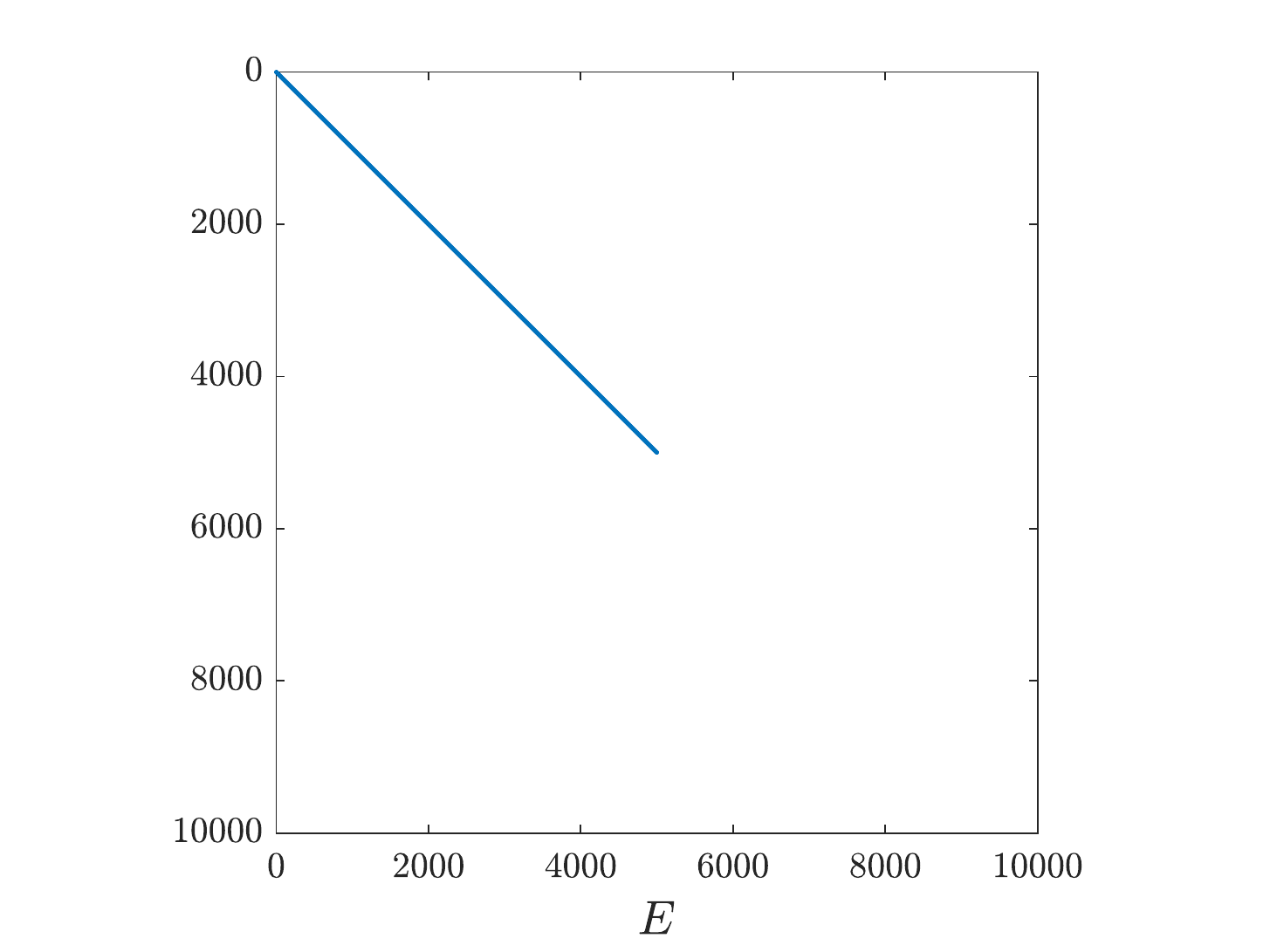}}
			\subfloat[]{\includegraphics[width=0.45\textwidth]{./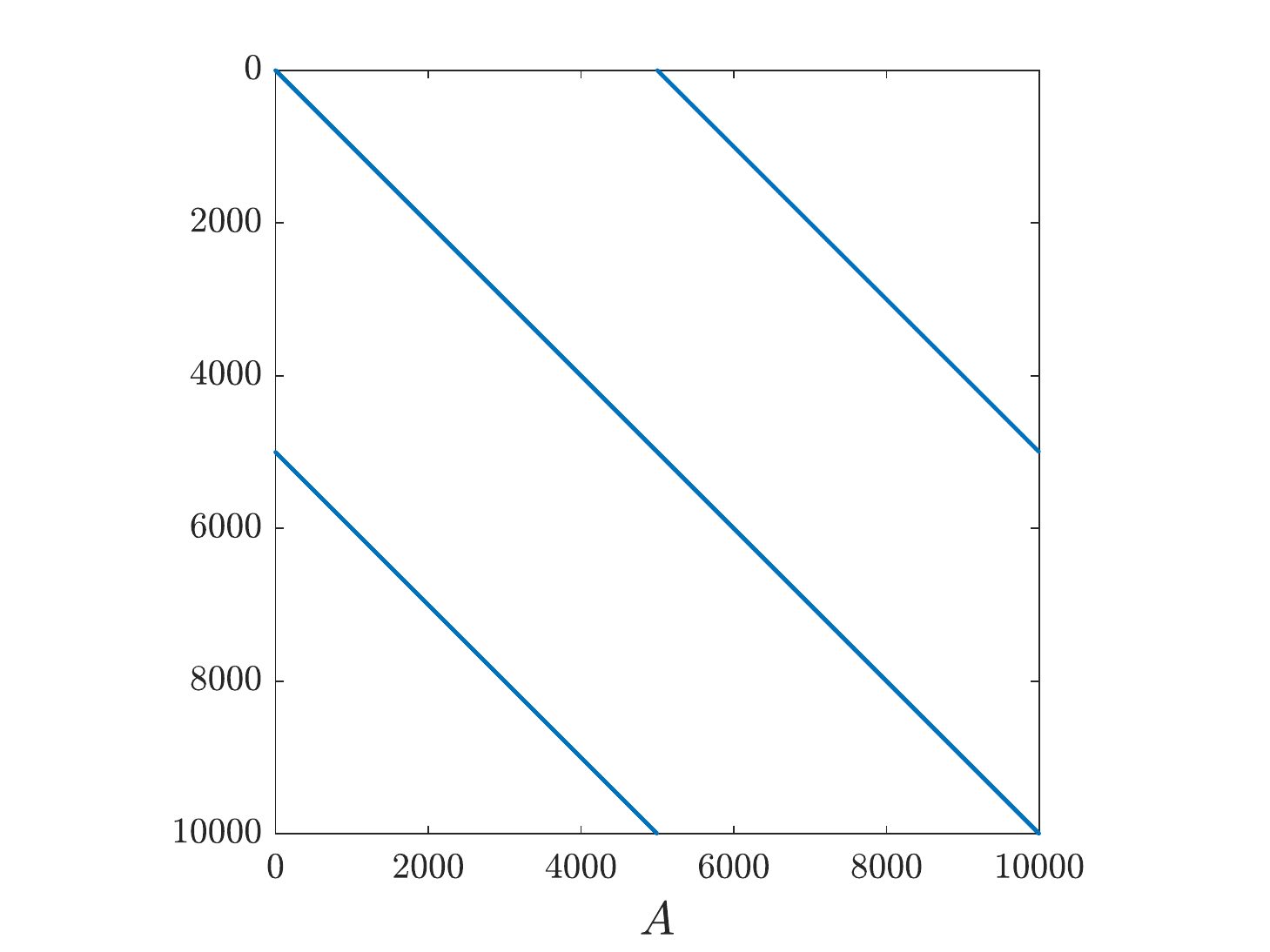}} \\
			\subfloat[]{\includegraphics[width=0.225\textwidth]{./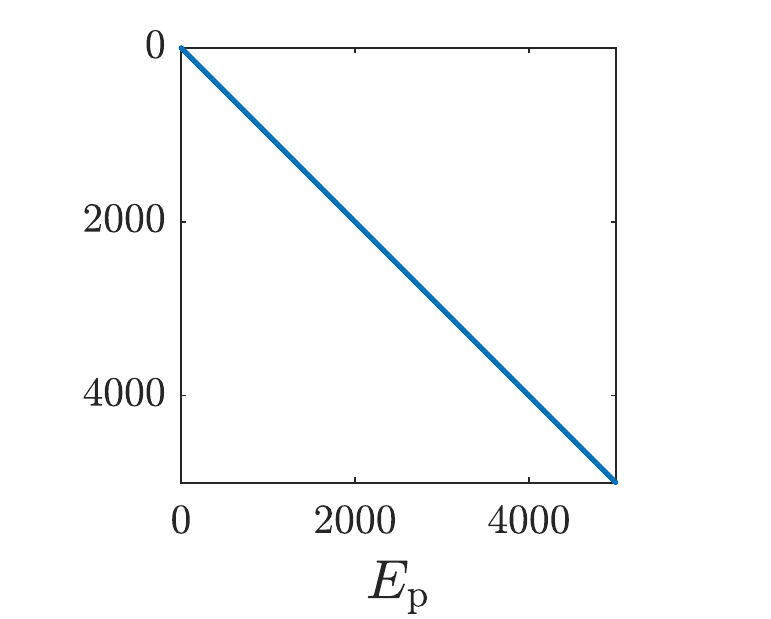}} \hspace{90pt}
			\subfloat[]{\includegraphics[width=0.225\textwidth]{./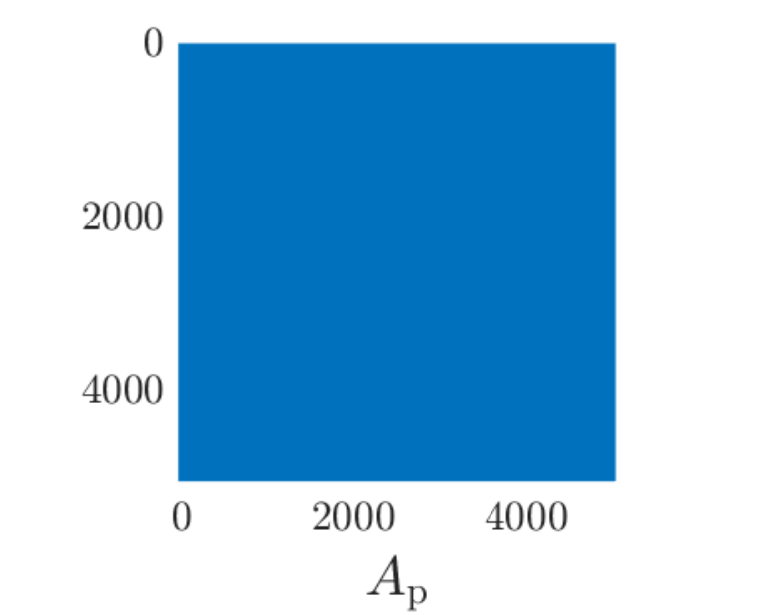}}
		\caption{Sparsity patterns of exemplary full-order matrices $\pE,\pA\in\R^{n\times n}$ and of the matrices $\prE,\prA\in\R^{n_2\times n_2}$ of its associated proper subsystem.}
		\label{fig:sparsity}
	\end{figure}
	To demonstrate the effects on interpolatory MOR, we solved \eqref{eq:proper_Krylov} for the proper subsystem and for the original system with matrices $E$, $A$ and $B$ using $100$ random complex shifts $\sigma_i$ with MATLAB's \texttt{mldivide} command. All computations were conducted using MATLAB R2021b (version 9.11.0.1873467) on an Intel\textsuperscript{\textregistered} Core\texttrademark  i7-8700 CPU (3.20 GHz, 6-Core) with 32 GB RAM. The computations of \eqref{eq:proper_Krylov} for the proper subsystem took $2.51$ seconds on average versus only $0.013$ seconds for the original model. Consequently, even though the proper subsystem is significantly smaller in size, the computation of the reduction matrix $V$ takes more than 150 times longer. 

	\end{example}

	Therefore, we propose another approach that works with the original (sparse) system matrix, irrespective of the system's differentiation index. 	
	\begin{theorem}
		Given a large-scale pH-DAE in staircase form with system matrix $\Ros$, as well as interpolation points $\{\sigma_1,\dots,\sigma_r\}$, and corresponding right tangential directions $\{b_1,\dots,b_r\}$, let $V\in\R^{n\times r}$ define a reduction matrix such that \eqref{eq:Inp_Krylov_PHS} holds with a decomposition $V^\T = [V_1^\T,V_2^\T,V_3^\T,V_4^\T]^\T$ with $V_j \in \R^{n_j\times r}$ for all $j=1,\ldots,4$. We define the reduction matrices	
		\begin{equation}\label{eq:red_matrices}
			\We = 	\mat{0&A_{14}^{-\T}C_4^\T\\ V_2& 0\\-A_{33}^{-\T}A_{23}^\T V_2 & 	A_{33}^{-\T}(C_3^\T-A_{13}^\T A_{14}^{-\T} C_4^\T)\\0&0\\0&I_m},\quad \Ve = \mat{0& A_{14}^{-\T}B_4\\V_2& 0 	\\ 0 & 0 \\0&0\\0&I_m}, 
		\end{equation}
		Then, the model associated with the reduced system matrix
		\begin{equation*}
			\rRos := \We^\T \Ros \Ve,
		\end{equation*}
		satisfies the tangential interpolation conditions \eqref{eq:interp_cond}, and its proper subsystem fulfils the pH structural constraints. The reduced system matrix admits a decomposition
		\begin{equation*}
			\rRos = s\mat{\rprE & 0 \\ 0 & \Dinf} + \mat{-\rprJ & -\rprG \\ \rprG^\T & \rprN} + \mat{\rprR & \rprP \\ \rprP^\T & \rprS}.
		\end{equation*}
	\end{theorem}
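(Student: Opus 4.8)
The plan is to reduce the three assertions to facts already available in the paper: the block structure of the s.s.e.-transformed matrix $\tRos$ in \eqref{eq:sse_trafo}, the pH property of the proper subsystem (Lemma~\ref{lem:prop_decomp}), congruence-invariance of the pH Rosenbrock decomposition, and the standard tangential-interpolation result \eqref{eq:Inp_Krylov_PHS}. Crucially this avoids any computation with the (possibly dense) proper-subsystem matrices $\prA,\prB,\prC$. I would argue in three steps, and I expect the only substantial work to lie in Step~2.

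\emph{Step 1: the interpolation data descends to the proper subsystem.} Fix $i$, set $w_i:=(\sigma_i\pE-\pA)^{-1}\pB b_i$, and partition $w_i=[w_{i,1}^\T,\dots,w_{i,4}^\T]^\T$ conformally with the staircase partition. The fourth and third block rows of $(\sigma_i\pE-\pA)w_i=\pB b_i$ read $A_{41}w_{i,1}=-B_4 b_i$ and $A_{33}w_{i,3}=-B_3 b_i-A_{31}w_{i,1}-A_{32}w_{i,2}$, so invertibility of $A_{41}$ and $A_{33}$ (staircase form) lets us eliminate $w_{i,1}$ and $w_{i,3}$; inserting these into the second block row and collecting terms produces exactly $(\sigma_i\prE-\prA)w_{i,2}=\prB b_i$, with $\prE=\pE_{22}$, $\prA=A_{22}-A_{23}A_{33}^{-1}A_{32}$ and $\prB$ the proper-block input matrix of \eqref{eq:sse_trafo}. (This is just the restriction of the change of variables of Section~\ref{subsec:trafo} to the proper block; it can also be checked directly.) Since $w_i\in\range(V)$ forces $w_{i,2}\in\range(V_2)$, we obtain $(\sigma_i\prE-\prA)^{-1}\prB b_i\in\range(V_2)$ for all $i$, i.e.\ $\range(V_2)$ is a tangential interpolation subspace for the proper subsystem.

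\emph{Step 2: a closed form for $\rRos$.} The identity to establish is
\begin{align*}
\rRos=\We^\T\Ros\Ve &=\mat{V_2^\T(s\prE-\prA)V_2 & -V_2^\T\prB\\ \prC V_2 & \prD+s\Dinf}\\
&=\mat{V_2^\T & 0\\ 0 & I_m}\,\prRos\,\mat{V_2 & 0\\ 0 & I_m}+s\mat{0 & 0\\ 0 & \Dinf}.
\end{align*}
I would prove it by noting that $\We$ and $\Ve$ factor through the s.s.e.\ transformers: there are constant matrices $P_L,P_R\in\R^{(n+m)\times(r+m)}$ with $\We=\Tone^\T P_L$ and $\Ve=\Ttwo P_R$, read off in closed form from the explicit expressions for $\Tone,\Ttwo$. (Matching the feedthrough columns uses the pH relation $A_{14}^\T=-A_{41}$ --- skew-symmetry of $\Gamma$ together with $R_{14}=R_{41}=0$ --- equivalently $A_{14}^{-\T}=-A_{41}^{-1}$.) A short evaluation shows that $P_R$ has a vanishing $x_1$-block, $P_L$ has vanishing $x_1$- and $x_3$-blocks, and both carry $V_2$ in the $x_2$-block and $[\,0\;\;I_m\,]$ in the feedthrough block. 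Hence $\rRos=P_L^\T(\Tone\Ros\Ttwo)P_R=P_L^\T\tRos P_R$ by \eqref{eq:sse_trafo}. In $\tRos$ the fourth block row has its only nonzero entry in the first block column, and the third block row equals $[\,0,0,-I_{\dimx_3},0,0\,]$; combined with the vanishing blocks of $P_L$ and $P_R$, this makes every cross term drop out, leaving only the contribution of the proper and feedthrough blocks, which is the asserted expression. (Alternatively one may simply expand $\We^\T\Ros\Ve$ block by block, where the bulk of the terms cancels through $A_{41}A_{14}^{-\T}=-I$.)

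\emph{Step 3: the three conclusions.} By Step~2, $\pHtfr(s)=\prC V_2(sV_2^\T\prE V_2-V_2^\T\prA V_2)^{-1}V_2^\T\prB+\prD+s\Dinf$, while $\pHtf(s)=\pHtft(s)=\prC(s\prE-\prA)^{-1}\prB+\prD+s\Dinf$ by \eqref{eq:pHtf_decomp}. The constant and improper terms coincide identically; the two proper parts are a Galerkin full-order/compressed pair, and by Step~1 the subspace $\range(V_2)$ satisfies the Krylov condition \eqref{eq:Inp_Krylov_PHS}, so they agree at each $\sigma_i$ along $b_i$ --- hence \eqref{eq:interp_cond}. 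For the structure, Step~2 exhibits the proper subsystem of $\rRos$ as $\mat{V_2^\T & 0\\ 0 & I_m}\prRos\mat{V_2 & 0\\ 0 & I_m}$, a congruence transform of $\prRos=s\,\mat{\prE & 0\\ 0 & 0}+\prGamma+\prW$; by Lemma~\ref{lem:prop_decomp} $\prGamma$ is skew-symmetric, $\prW\geq0$ and $\prE\geq0$, and a congruence transformation preserves all three, so the reduced proper subsystem again satisfies Definition~\ref{def:ph}. Finally, splitting the closed form of $\rRos$ into its $s$-linear part $s\,\mat{\rprE & 0\\ 0 & \Dinf}$, with $\rprE:=V_2^\T\prE V_2\geq0$ and $\Dinf=C_4A_{14}^{-1}\pE_{11}A_{14}^{-\T}B_4=(A_{14}^{-\T}B_4)^\T\pE_{11}(A_{14}^{-\T}B_4)\geq0$ (using $\pE_{11}>0$ and $C_4=B_4^\T$), its constant skew-symmetric part $\mat{-\rprJ & -\rprG\\ \rprG^\T & \rprN}$ and its constant symmetric part $\mat{\rprR & \rprP\\ \rprP^\T & \rprS}$, gives the asserted decomposition, the symmetric part being exactly the (positive semidefinite) dissipation matrix of the reduced proper subsystem. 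Thus once the factorizations $\We=\Tone^\T P_L$, $\Ve=\Ttwo P_R$ and the cross-term cancellation of Step~2 are secured --- which is where I expect the real bookkeeping --- everything else is a routine consequence of \eqref{eq:Inp_Krylov_PHS}, Lemma~\ref{lem:prop_decomp} and \eqref{eq:pHtf_decomp}.
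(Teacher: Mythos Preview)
Your proof is correct and lands on the same central identity as the paper, namely
\[
\rRos=\We^\T\Ros\Ve=\begin{pmatrix}V_2^\T&0\\0&I_m\end{pmatrix}\begin{pmatrix}s\prE-\prA&-\prB\\\prC&\prD+s\Dinf\end{pmatrix}\begin{pmatrix}V_2&0\\0&I_m\end{pmatrix},
\]
from which pH structure (congruence with $\prRos$, Lemma~\ref{lem:prop_decomp}) and the decomposition follow immediately. The route to this identity and to the interpolation conditions differs in organization from the paper's. The paper argues interpolation by first performing an \emph{intermediate} Galerkin step with $U=\operatorname{diag}(I_{n_1},V_2,I_{n_3},I_{n_4})$, for which $\range(V)\subseteq\range(U)$ gives \eqref{eq:interp_cond} directly, and then applies the s.s.e.\ transformation of Section~\ref{subsec:trafo} to the \emph{reduced} staircase system $\Ros_2$; the combination is asserted to produce $\We,\Ve$ and the identity above. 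You instead (i) descend the Krylov condition to the proper block by eliminating $w_{i,1},w_{i,3}$, so that $\range(V_2)$ is itself an interpolation subspace for $\prRos$, and (ii) factor $\We=\Tone^\T P_L$, $\Ve=\Ttwo P_R$ through the \emph{full-order} s.s.e.\ transformers and exploit the block pattern of $\tRos$ to kill the cross terms. Both are valid; your variant is a bit more explicit about why the $x_1$-, $x_3$- and $x_4$-blocks drop out (the vanishing rows of $P_L,P_R$ you identify do indeed hold, and the pH relation $A_{14}^{-\T}=-A_{41}^{-1}$ is exactly what is needed), while the paper's ``reduce first, then transform'' viewpoint makes the link to the index-1 approach of \cite{BeaGM21} more visible and avoids computing $P_L,P_R$ altogether.
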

	\begin{proof}
	Using the decomposition of $V$, the original system matrix $\Ros$ may initially be reduced in the following way:
	\begin{equation} \label{eq:Galerkin}
		\Ros_{2}(s) =  \mat{U^\T&0\\0&I_{\dimu}} \Ros \mat{U&0\\0&I_{\dimu}}, \qquad U:= \text{\footnotesize$\mat{I_{n_1}&0&0&0\\0&V_2&0&0\\0&0&I_{n_3}&0\\0&0&0&I_{n_4}}$}.
	\end{equation}	
	The reduced model associated with $\Ros_{2}$ satisfies the tangential interpolation conditions in \eqref{eq:interp_cond} since $\text{range}(V)\subseteq \text{range}(U)$. Note that this approach, which was also proposed in \cite{BeaGM21} for index-1 pH-DAEs, produces ROMs that are still comparatively large: the ROM has dimension $n_1+r+n_3+n_4$. However, since the ROM is still a pH-DAE in staircase form, we can obtain a minimal representation by applying a transformation under s.s.e., as in Section \ref{subsec:trafo}, and extracting the proper system matrix without changing its transfer function. A combination of the reduction in \eqref{eq:Galerkin}, the transformation and extraction of proper parts yields the reduction matrices $\We, \Ve$. Moreover, simple algebraic manipulations show that
	\begin{equation*}
		\rRos = \We^\T \Ros \Ve = \mat{V_2^\T&0\\0&I_{\dimu}}\mat{s\prE-\prA&-\prB\\\prC&\prD+s\Dinf}\mat{V_2&0\\0&I_{\dimu}},
	\end{equation*}
	and consequently, the proper part of $\rRos$ fulfils the pH structural constraints and may be decomposed to obtain a pH representation, which completes the proof.	
	\end{proof}

	To improve the numerical stability of Krylov subspace methods, the matrix $V$ is usually orthogonalized such that $V^\T V=I_r$. This orthogonalization does not change the moment matching conditions in \eqref{eq:Inp_Krylov_PHS} since these only depend on the subspace that is spanned by the column vectors of $V$, not the basis itself. However, note that in our case, even if $V$ is orthogonal, this is generally not the case for its submatrix $V_2$. To improve the numerical stability, we employ the cosine-sine decomposition, as discussed in \cite{EggKLMM18}. For this, we split $V$ into two parts: $V_2$ and the remaining submatrices ${V_{\rm rem}^\T = [V_1^\T,V_3^\T,V_4^\T]^\T\in\R^{n_{\rm rem}\times r}}$. We then compute the decomposition
	\begin{equation} \label{eq:sin_cos}
		\mat{V_2\\V_{\rm rem}} = \mat{\bar{V}_2&0\\0&\bar{V}_{\rm rem}}\mat{C_s\\S_s}X_s^\T,
	\end{equation} 
	with orthogonal $\bar{V_2}\in\R^{n_2\times r}$, $\bar{V}_{\rm rem}\in\R^{n_{\rm rem}\times r}$, and $X_s\in\R^{r\times r}$, as well as diagonal ${C_s,S_s\in\R^{r\times r}}$ such that ${C_s^\T C_s + S_s^\T S_s = I_r}$. Replacing $V_2$ in \eqref{eq:red_matrices} with $\bar{V}_2$ yields the final reduced system matrix $\rRos$. 
	
	\subsection{Minimal pH-DAE representation}\label{subsec:compose_ROM}
	
	To find a pH-DAE representation for $\rRos$, we have to incorporate the improper part $s\Dinf$ that has been separated back into the model. Two different methods have been proposed in \cite{MosSMV2022b} and \cite{Cherifi2022} for this purpose. Since the method in \cite{Cherifi2022} only leads a minimal ROM representation if $\Dinf$ has full rank, we proceed similarly as in \cite{MosSMV2022b}. We have that  
	\begin{equation}\label{eq:dinf_prop}
		\Dinf = -\pC_4\pA_{14}^{-1}\pE_{11}\pA_{41}^{-1}\pB_4 = \pG_4^\T\pA_{41}^{-\T}\pE_{11}\pA_{41}^{-1}\pG_4 = \Dinf^\T \geq 0.
	\end{equation} 
	Consequently, there exists a rank-revealing factorization ${\Dinf = L_{\infty}L_{\infty}^\T}$ with ${L_{\infty}\in \R^{\dimu\times q}}$. A minimal ROM representation $\pHsysr$ in staircase form can be found with
	\begin{alignat*}{3}
		\rE&= \mat{I_q  & 0 & 0 \\ 0 & \rprE  & 0 \\ 0 & 0 & 0},\quad &&\rJ=\mat{0 & 0 & I_q \\ 0 & \rprJ & 0  \\ -I_{q} & 0  & 0}, \quad &&\rR=\mat{0 & 0 & 0 \\ 0 & \rprR & 0 \\ 0 & 0 & 0}, \\ \rG &= \mat{0 \\ \rprG \\ L_\infty^\T}, &&\rP= \mat{0 \\ \rprP \\ 0},&&\rS=\rprS,\quad \rN=\rprN.
	\end{alignat*}

	\begin{algorithm}[htpb]
		\LinesNumbered
		\SetAlgoLined
		\DontPrintSemicolon
		\SetKwInOut{Input}{Input}\SetKwInOut{Output}{Output}
		\Input{Large-scale pH-DAE $\pHsys$ in staircase form with system matrix $\Ros$; set of interpolation points $\{\sigma_1,\dots,\sigma_r\}$ and corresponding right tangential directions $\{b_1,\dots,b_r\}$ (both closed under complex conjugation).
		}
		\Output{Reduced pH-DAE $\pHsysr$ with system matrix $\rRos$. \;}
		Compute $V = \left[V_1^\T,V_2^\T,V_3^\T,V_4^\T\right]^\T\in\R^{n\times r}$ such that \eqref{eq:Inp_Krylov_PHS} holds.\;
		Orthogonalize $V_2$ via the cosine-sine decomposition in \eqref{eq:sin_cos} such that  $\rm{range} (V_2) = \rm{range} (\bar{V}_2)$ with $\bar{V}_2^\T\bar{V}_2=I_r$. \;
		Compute the reduction matrices $\We, \Ve$ as in Section \ref{subsec:mor}:
		\begin{equation*}
			\We = 	\mat{0&A_{14}^{-\T}C_4^\T\\ \bar{V}_2& 0\\-A_{33}^{-\T}A_{23}^\T \bar{V}_2 & 	A_{33}^{-\T}(C_3^\T-A_{13}^\T A_{14}^{-\T} C_4^\T)\\0&0\\0&I_m},\quad \Ve = \mat{0& A_{14}^{-\T}B_4 \\ \bar{V}_2& 0 	\\ 0 & 0 \\0&0\\0&I_m}. 
		\end{equation*} \;
		\vspace*{-.4cm}
		Compute and decompose the reduced system matrix $$\Ros_r = \We^\T \Ros\Ve = s\mat{\rprE & 0 \\ 0 & \Dinf} + \mat{-\rprJ & -\rprG \\ \rprG^\T & \rprN} + \mat{\rprR & \rprP \\ \rprP^\T & \rprS}.$$ \;
		\vspace*{-.4cm}
		Compute a rank-revealing factorization $\Dinf = L_{\infty} L_{\infty}^\T$ with $L_{\infty}\in\R^{m\times q}$.\;
		\eIf {$q>0$} {
		Construct ROM $\pHsysr$ as in \eqref{eq:ROM} with
		\begin{alignat*}{3}
			\rE&= \mat{I_q  & 0 & 0 \\ 0 & \rprE  & 0 \\ 0 & 0 & 0},\quad &&\rJ=\mat{0 & 0 & I_q \\ 0 & \rprJ & 0  \\ -I_{q} & 0  & 0}, \quad &&\rR=\mat{0 & 0 & 0 \\ 0 & \rprR & 0 \\ 0 & 0 & 0}, \\ \rG &= \mat{0 \\ \rprG \\ L_\infty^\T}, &&\rP= \mat{0 \\ \rprP \\ 0},&&\rS=\rprS,\quad \rN=\rprN.
		\end{alignat*} 
		\vspace*{-.4cm}
		\;}
		{Construct ROM $\pHsysr$ as in \eqref{eq:ROM} directly from $\rRos$.}
		\caption{Tangential Interpolation of pH-DAEs}
		\label{alg:arnoldi}
	\end{algorithm}
	\begin{table}[htpb]
	\begin{center}
		\begin{tabular}{ |c||c|c|c|c|c|p{4cm}| } 
			\hline
			Category & $\nu$ & $n_2$ & $n_3$ & $n_4/n_1$ & $\Dinf$ & References\\
			\hline 
			\hline
			Index-0 & $0$ & $\neq 0$& $0$ & $0$ & $0$ & \cite[Theorem 7]{Gugercin2012} \\
			\hline 
			Index-1 & $1$ & $\neq 0$& $\neq 0$ & $0$ & $0$ & \cite[Theorems 1 and 2]{BeaGM21} \newline \cite[Theorem 5]{HauMM19} \\		
			\hline 
			Proper Index-2 & $2$ & $\neq 0$ & $0$ & $\neq 0$ & $0$ & \cite[Theorem 3]{BeaGM21} \\		
			\hline 
			Improper Index-2 & $2$ & $\neq 0$& $0$ & $\neq 0$ & $\neq 0$ & \cite[Theorem 4]{BeaGM21}  \\		
			\hline 
			Proper Index-1-2 & $2$ & $\neq 0$& $\neq 0$ & $\neq 0$ & $0$ & n/a \\		
			\hline 
			Improper Index-1-2 & $2$ & $\neq 0$ & $\neq 0$ & $\neq 0$ & $\neq 0$ & n/a  \\		
			\hline 
		\end{tabular}
		
		\caption{\label{tab:soa} Proposed tangential interpolation methods for pH-DAEs. The six different categories are derived from the original system's differentiation indices $\nu$ and whether its transfer function has improper parts.}
	\end{center}
\end{table}

	\subsection{Discussion}
	
	In the following, we would like to briefly discuss the differences between the proposed Rosenbrock framework and other interpolatory MOR approaches that have been proposed for pH-DAEs. On the one hand, the approaches get more complex if the system's differentiation index $\nu$ increases. On the other hand, as revealed in this section, if the system's transfer function has improper parts, these require special care since they have to be preserved exactly in the ROM. As shown in Lemma \ref{lem:staircase}, the differentiation index of LTI pH-DAEs is at most two, and improper parts may only occur if the differentiation index is two (see \eqref{eq:dinf_prop}). Therefore, in the context of MOR, we may identify six different system categories. An overview of existing methods for each category, to the best of the author's knowledge, is given in Table \ref{tab:soa}. 
	
	The tangential interpolation of pH-ODEs ($\nu=0$), as initially proposed in \cite{Gugercin2012}, is well understood and leads to minimal ROM representations in pH-ODE form. For these systems, our approach is equivalent since the reduction matrices simplify to ${\We=\Ve = \text{\footnotesize $\mat{\bar{V}_2&0\\0&I_\dimu}$}}$. For index-1 pH-DAEs, methods in \cite{BeaGM21,HauMM19} have been proposed, which rely on different semi-explicit representations of the FOM. In \cite[Theorem 1]{BeaGM21}, the feedthrough matrix of the ROM is modified to match the polynomial part of the FOM, which in the index-1 case, is constant. However, this method does not guarantee that the ROM fulfils the pH structural constraints. A remedy to this problem is to preserve the algebraic constraints of the FOM as proposed in \cite[Theorem 2]{BeaGM21} and \cite[Theorem 5]{HauMM19}. However, this does not generally yield minimal ROMs since redundant algebraic equations cannot be removed, as discussed in \cite[Remark 2]{BeaGM21}. In contrast, our method does not impose additional assumptions since it also only requires a semi-explicit FOM representation but guarantees the preservation of the pH form and yields minimal ROM representations. 
	
	For systems with $\nu = 2$ that do not have an index-1 part ($n_3=0$), we may distinguish between the proper and improper case. The proper index-2 case is comparable to the pH-ODE case if the system is in semi-explicit form and the method proposed in \cite{BeaGM21} yields minimal ROMs in pH form. Our method achieves the same goals, but may require one additional transformation to identify the full-rank matrix $\pJ_{41}$. The improper index-2 case is treated in \cite[Theorem 4]{BeaGM21}. However, this method does not guarantee that the ROM fulfils the pH structural constraints, which is generally the case for our method, again under the assumption that the system is in staircase form. To the best of the author's knowledge, methods for systems with index-1 \emph{and} index-2 parts ($n_3 \neq 0$, $n_4 \neq 0$) have not been proposed yet. These two categories are also covered by our framework, which applies to all pH-DAEs in staircase form. 
	
	\section{$\htwo$- and $\hinf$-inspired tangential interpolation} \label{sec:extensions}
	The question that remains is how to choose the parameters in Algorithm \ref{alg:arnoldi}, i.e. the set of interpolation points and tangential directions. For unstructured ODE and DAE systems, different $\htwo$- and $\hinf$-inspired strategies have been proposed (see, e.g., \cite{Gugercin2008,Druskin2011,Druskin2014,Flagg2013,Castagnotto2017}) which have also partly been adapted to the pH-ODE and pH-DAE cases in \cite{Gugercin2012,BeaGM21}. 
	\newpage
	In the following, we demonstrate how these ideas may be incorporated into the proposed Rosenbrock framework, and refer to the cited work in each section for implementation details.
	 
	\subsection{Interpolatory $\htwo$ approximation}
	
	In $\htwo$-inspired interpolation methods for general DAE systems, the transfer function of the FOM is typically decomposed into the sum ${H(s) = \pHtfsp(s)+\pHtf_{\rm pol}}$ with a \emph{strictly proper} part $\pHtfsp$ satisfying $\lim_{s\to\infty}\pHtfsp(s)=0$ and a polynomial part $\pHtf_{\rm pol}$ that potentially grows polynomially for $s\to\infty$. For the $\htwo$ error $\Vert\pHtf-\pHtfr\Vert_{\htwo}$ to be bounded, we require $\pHtf-\pHtfr \in \mathcal{RH}_2^{m \times m}$. Assuming a similar decomposition of $\pHtfr$ into ${\pHtfr(s) = \pHtf_{\rm sp,r}(s)+\pHtf_{\rm pol,r}(s)}$, this is only the case if $\pHtf_{\rm pol,r}=\pHtf_{\rm pol}$. The necessary conditions for locally $\htwo$-optimal ROMs may then be formulated as interpolation conditions.
	
	\begin{lemma}\label{lem:h2_opt_cond}{\cite[Theorem 9.2.1]{Antoulas2020}} Let the transfer function ${H(s) = \pHtfsp(s)+\pHtf_{\rm pol}}$ be decomposed into strictly proper and polynomial parts. Let the ROM transfer function $\pHtfr$ have an analogous decomposition ${\pHtfr(s) = \pHtf_{\rm sp,r}(s)+\pHtf_{\rm pol,r}(s)}$ with strictly proper part ${\pHtf_{\rm sp,r}(s)=C_{\rm sp,r}(sE_{\rm sp,r}-A_{\rm sp,r})^{-1} B_{\rm sp,r}}$ such that ${A_{\rm sp,r}\in\R^{r\times r}}$, nonsingular ${E_{\rm sp,r}\in\R^{r\times r}}$ and ${B_{\rm sp,r}, C_{\rm sp,r}^\T\in\R^{r\times m}}$. If $\pHtfr$ minimizes the $\htwo$ error $\Vert\pHtf-\pHtfr\Vert_{\htwo}$ over all ROMs with an $r$-th order strictly proper part, then $\pHtf_{\rm pol,r}=\pHtf_{\rm pol}$ and $\pHtf_{\rm sp,r}$ minimizes the $\htwo$ error $\Vert\pHtf_{\rm sp}-\pHtf_{\rm sp,r}\Vert_{\htwo}$. Let $\pHtf_{\rm sp,r}$ be represented by its pole-residue expansion ${\pHtf_{\rm sp,r}(s)=\sum\limits_{i=1}^{r} \frac{l_i r_i^\T}{s-\lambda_i}}$ where ${r_i\in \mathbb{C}^m, l_i\in\mathbb{C}^m}$ and with simple poles $\lambda_i \in \mathbb{C}$. Then, the tangential interpolation conditions
		\begin{subequations}
			\begin{align}
				\pHtf(-\lambda_i)r_i &= \pHtfr(-\lambda_i)r_i, \label{eq:OC1}\\
					l_i^\T\pHtf(-\lambda_i) &= l_i^\T\pHtfr(-\lambda_i), \label{eq:OC2}\\
				l_i^\T\pHtf^{\prime}(-\lambda_i)r_i &= l_i^\T\pHtfr^{\prime}(-\lambda_i)r_i, \label{eq:OC3}
			\end{align} \label{eq:OC}
		\end{subequations}			
	hold for all $i=1,\,\dots\,,r$.
	\end{lemma}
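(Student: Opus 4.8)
\emph{Step 1 (the polynomial parts must agree).} The plan is to peel off the polynomial part, reduce to the classical $\htwo$-optimal reduction of a strictly proper system, invoke the known first-order necessary conditions there, and then transport those conditions back to $\pHtf$ and $\pHtfr$. First, for $\Vert\pHtf-\pHtfr\Vert_{\htwo}$ to be finite the error must lie in $\mathcal{RH}_2^{m\times m}$ and hence be strictly proper, since a real-rational matrix with a nonzero constant (let alone higher-degree) polynomial part has infinite $\htwo$ norm. As the rational splitting into a strictly proper and a polynomial part is unique and $\pHtfsp-\pHtf_{\rm sp,r}$ is strictly proper, the polynomial part of $\pHtf-\pHtfr$ equals $\pHtf_{\rm pol}-\pHtf_{\rm pol,r}$, which therefore vanishes. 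The feasible set --- ROMs with an $r$-th order strictly proper part --- contains at least one element of finite error (take any $\htwo$-optimal $r$-th order reduction of $\pHtfsp$ and append $\pHtf_{\rm pol}$), so a minimizer has finite error and hence $\pHtf_{\rm pol,r}=\pHtf_{\rm pol}$.

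\emph{Step 2 (reduction to the strictly proper subproblem and the classical conditions).} With $\pHtf_{\rm pol,r}=\pHtf_{\rm pol}$ one gets $\pHtf-\pHtfr=\pHtfsp-\pHtf_{\rm sp,r}$, so $\Vert\pHtf-\pHtfr\Vert_{\htwo}=\Vert\pHtfsp-\pHtf_{\rm sp,r}\Vert_{\htwo}$, and minimizing the former over ROMs with an $r$-th order strictly proper part is equivalent to minimizing the latter over all $r$-th order strictly proper $\pHtf_{\rm sp,r}$. Hence a (local) minimizer $\pHtfr$ induces a (local) minimizer $\pHtf_{\rm sp,r}$ of the strictly proper $\htwo$ problem, at which point I would invoke the classical interpolatory first-order necessary conditions for $\htwo$-optimal reduction of strictly proper systems (Meier--Luenberger, in the tangential form of \cite{Gugercin2008,Antoulas2020}): if $\pHtf_{\rm sp,r}(s)=\sum_{i=1}^{r}\frac{l_i r_i^\T}{s-\lambda_i}$ is locally $\htwo$-optimal with simple poles $\lambda_i$, then $\pHtfsp(-\lambda_i)r_i=\pHtf_{\rm sp,r}(-\lambda_i)r_i$, $l_i^\T\pHtfsp(-\lambda_i)=l_i^\T\pHtf_{\rm sp,r}(-\lambda_i)$ and $l_i^\T\pHtfsp'(-\lambda_i)r_i=l_i^\T\pHtf_{\rm sp,r}'(-\lambda_i)r_i$ for all $i=1,\dots,r$; note that each $-\lambda_i$ lies where $\pHtfsp$ is analytic, so all evaluations are well defined.

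\emph{Step 3 (transporting the conditions).} Writing $\pHtf=\pHtfsp+\pHtf_{\rm pol}$ and, by Step 1, $\pHtfr=\pHtf_{\rm sp,r}+\pHtf_{\rm pol}$, the common summand cancels in each difference, so $\pHtf(-\lambda_i)-\pHtfr(-\lambda_i)=\pHtfsp(-\lambda_i)-\pHtf_{\rm sp,r}(-\lambda_i)$, and, since differentiating the identical polynomial term yields the same quantity on both sides, $\pHtf'(-\lambda_i)-\pHtfr'(-\lambda_i)=\pHtfsp'(-\lambda_i)-\pHtf_{\rm sp,r}'(-\lambda_i)$. Right-multiplying by $r_i$, left-multiplying by $l_i^\T$, or both, then turns the strictly proper Hermite conditions of Step 2 into exactly \eqref{eq:OC1}--\eqref{eq:OC3}, which completes the argument; the claim $\pHtf_{\rm pol,r}=\pHtf_{\rm pol}$ and the intermediate claim that $\pHtf_{\rm sp,r}$ minimizes $\Vert\pHtfsp-\pHtf_{\rm sp,r}\Vert_{\htwo}$ were already established in Steps 1 and 2.

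\emph{Main obstacle.} The only substantive ingredient is the strictly proper $\htwo$-optimality result invoked in Step 2; its proof requires differentiating the $\htwo$ cost with respect to a pole-residue parametrization of $\pHtf_{\rm sp,r}$ and setting the gradient to zero, which is standard but lengthy, so I would cite it rather than reproduce it. The remainder is bookkeeping: uniqueness of the strictly proper/polynomial splitting, existence of a finite-error competitor so that the minimizer is genuine, and simplicity of the $\lambda_i$ so that both the pole-residue expansion and the Hermite conditions are well posed.
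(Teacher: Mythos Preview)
The paper does not prove this lemma; it is stated with a citation to \cite[Theorem~9.2.1]{Antoulas2020} and used as a black box. Your proposal reconstructs precisely the standard argument underlying that cited result: first force $\pHtf_{\rm pol,r}=\pHtf_{\rm pol}$ by finiteness of the $\htwo$ error, then reduce to the strictly proper $\htwo$ problem and invoke the Meier--Luenberger/tangential interpolation conditions, and finally observe that adding the common polynomial part to both sides preserves the Hermite conditions. This is correct and is essentially the proof one finds in \cite{Antoulas2020}, so there is nothing to compare against in the present paper beyond noting that your write-up supplies what the authors chose to cite.
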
	
	This connection between interpolatory and $\mathcal{H}_{2}$-optimal model reduction is the motivation behind the well-known \emph{iterative rational Krylov algorithm} (IRKA) \cite{Gugercin2008}, which utilizes a fixed-point iteration to enforce the necessary $\mathcal{H}_{2}$ optimality conditions for general DAE systems. Since for pH systems, the matrix $U$ in \eqref{eq:projective_MOR} is typically chosen to enforce structure preservation, fewer degrees of freedom are available for interpolation, and it is generally only possible to fulfil a subset of \eqref{eq:OC}, e.g. the conditions in \eqref{eq:OC1}. Enforcing these conditions in an iterative manner using \eqref{eq:Inp_Krylov_PHS} leads to the IRKA-PH algorithm proposed for pH-ODE systems in \cite{Gugercin2012}. 

	Embedding the IRKA-PH algorithm into the proposed pH-DAE framework is straightforward. In each IRKA-PH iteration, we first compute the reduced system matrix $\rRos$, using some initial interpolation data in the first iteration. We directly obtain the reduced strictly proper transfer function $$\pHtf_{\rm sp,r}(s)=(\rprG+\rprP)^\T(s\rprE-(\rprJ-\rprR))^{-1}(\rprG-\rprP).$$ Suppose that the pencil $\lambda\rprE-(\rprJ-\rprR)$ has simple eigenvalues $\lambda_i$ and let $t_i\in\C^r$ denote a left eigenvector associated with $\lambda_i$, i.e.
	\begin{equation} \label{eq:evp}
		t_i^\T(\lambda_i\rprE-(\rprJ-\rprR))=0. 
	\end{equation}
	The (right) residual direction is then given by $r_i = (\rprG-\rprP)^\T t_i$  and to enforce the interpolation conditions in \eqref{eq:OC1}, we set $\sigma_i=-\lambda_i$ and $b_i=r_i$ for all $i=1,\ldots,\,r$. This procedure is repeated, and upon convergence, the ROM satisfies the subset \eqref{eq:OC1} of $\htwo$ optimality conditions. Afterwards, we may attach the polynomial part ${\pHtf_{\rm pol}(s)=\prD + s\Dinf}$ to the strictly proper ROM as described in Section~\ref{subsec:compose_ROM}. One disadvantage of this approach, which is summarized in Algorithm \ref{alg:irkaPH}, is that a new ROM is computed in each iteration. In \cite{Moser2021}, an adaptation named \emph{CIRKA-PH} was proposed, which has the potential to significantly accelerate IRKA-PH, especially in large-scale settings for which interpolatory methods are particularly powerful. Embedding CIRKA-PH into the pH-DAE framework works in a similar way. 
	\begin{algorithm}[tpb]\caption{IRKA-PH for pH-DAEs (based on \cite{Gugercin2012})}\label{alg:irkaPH}
		\LinesNumbered
		\SetAlgoLined
		\DontPrintSemicolon
		\SetKwInOut{Input}{Input}\SetKwInOut{Output}{Output}
		\Input{Large-scale pH-DAE $\pHsys$ in staircase form; set of interpolation points $\{\sigma_1,\dots,\sigma_r\}$ and corresponding right tangential directions $\{b_1,\dots,b_r\}$ (both closed under complex conjugation).
		}
		\Output{Reduced pH-DAE $\pHsysr$. \;}			
		\While{not converged}{
			Perform steps 1-4 of Algorithm \ref{alg:arnoldi}. \;
			Compute $t_i\in\C^m, \lambda_i\in \C$ solving \eqref{eq:evp} for all $i=1,\dots, r$. \;
			$\sigma_i \gets - \lambda_i$ and $b_i \gets (\rprG-\rprP)^\T t_i \text{ for all}\; i=1,\dots, r$  \;			
		}
		Perform steps 5-10 of Algorithm \ref{alg:arnoldi}.\;
	\end{algorithm}	
	
	\subsection{Adaptive interpolation}
	In practice, besides the computational expense of IRKA-PH, it may sometimes be difficult to determine a suitable reduced order $r$ in the first place. In \cite{Druskin2011,Druskin2014}, an \emph{adaptive} approach was proposed that tackles this problem by iteratively adding new interpolation data in a complex region $\mathcal{S}$ where the approximation quality of the ROM is still poor. For pH-DAEs, this approach initially requires the computation of the proper system matrix $\prRos$, as described in Section \ref{subsec:trafo}. The approximation quality of a ROM generated in steps 1-4 of Algorithm \ref{alg:arnoldi} can then be assessed at points $\mu\in\mathcal{S}$ with the following residual matrix \cite{Druskin2014}:
	\begin{equation}
		\zeta(\mu) := (\prA-\mu\prE)\bar{V}_2(\rprJ-\rprR-\mu\rprE)^{-1}(\rprG-\rprP)-\prB.
	\end{equation}
	In each iteration, a new interpolation point $\sigma_{r+1}$ is added at the point in $\mathcal{S}$ where the norm of this residual matrix reaches its maximum, and a similar approach is taken to compute a new corresponding tangential direction $b_{r+1}$. Since $\sigma_{r+1}$ and $b_{r+1}$ are generally complex, their complex conjugates $\bar{\sigma}_{r+1}$ and $\bar{b}_{r+1}$ are also added to the interpolation data to keep it closed under conjugation. This way, the ROM dimension $r$ increases in each iteration until its transfer function does not significantly change between two subsequent iterations or the predefined maximum dimension $r_{\rm max}$ is reached. For strategies on how to choose $b_{r+1}$ and update the complex region $\mathcal{S}$, the interested reader is referred to \cite{Druskin2014}. The general approach is summarized in Algorithm \ref{alg:trksmPH}. 
	
	\begin{algorithm}[tpb]\caption{TRKSM-PH for pH-DAEs (based on \cite{Druskin2011,Druskin2014})}\label{alg:trksmPH}
		\LinesNumbered
		\SetAlgoLined
		\DontPrintSemicolon
		\SetKwInOut{Input}{Input}\SetKwInOut{Output}{Output}
		\Input{Large-scale pH-DAE $\pHsys$ in staircase form; set of interpolation points $\{\sigma_1,\dots,\sigma_r\}$ and corresponding right tangential directions $\{b_1,\dots,b_r\}$ (both closed under complex conjugation); maximum reduced order $r_{\rm max}>r_0$; initial complex region $\mathcal{S}$.
		}
		\Output{Reduced pH-DAE $\pHsysr$.\;}
		Compute and decompose $\prRos$ as in Section \ref{subsec:trafo}. \;			
		\While{not converged \textbf{and} $r < r_{max}$}{
			Perform steps 1-4 of Algorithm \ref{alg:arnoldi}.\;
			Solve \textbf{$\sigma_{r+1} = {\rm arg}\, \max_{\mu\in\mathcal{S}} \Vert \zeta(\mu)\Vert$}. \;
			Solve $b_{r+1} = {\rm arg}\, \max_{\Vert d\Vert=1} \Vert \zeta(\sigma_{r+1})d\Vert$. \;
			Add $(\sigma_{r+1},\bar{\sigma}_{r+1})$ and $(b_{r+1},\bar{b}_{r+1})$ to the interpolation data. \;
			Update the complex region $\mathcal{S}$. \;			
		}
		Perform steps 5-10 of Algorithm \ref{alg:arnoldi}.\;
	\end{algorithm}	
	
	\subsection{Interpolatory $\hinf$ approximation}
	So far, we have enforced that $\rprS=\prS$ and $\rprN=\prN$ to keep the $\htwo$ error bounded. For $\hinf$-inspired MOR, we only require $\pHtf-\pHtfr\in\mathcal{RH}_{\infty}^{m\times m}$ and thus, the reduced feedthrough matrices pose additional degrees of freedom that may be exploited in a similar manner as proposed in \cite{Beattie2009,Flagg2013,Castagnotto2017} for unstructured ODE systems. We may add structure-preserving perturbations to the feedthrough matrices $\rprN$ and $\rprS$ in the following way:
	\begin{alignat}{3}
		\perN &= \rprN + \Delta_N, \qquad &&\Delta_N &&= -\Delta_N^\T, \\
		\perS &= \rprS + \Delta_S, \qquad &&\Delta_S &&= \Delta_S^\T  \geq 0.	
	\end{alignat} 
	Simply adding these perturbations would only change the direct feedthrough of the ROM but not the dynamics and is therefore not expected to improve the $\hinf$ approximation quality significantly. However, as shown in \cite{Beattie2009}, it is also possible to perturb the feedthrough matrix of the ROM while retaining predefined interpolation conditions with the FOM --- and the same holds for pH-DAEs.
	\begin{lemma}\label{lem:ihaPH}
		Assume that we have obtained a reduced system matrix ${\rRos=s\mathcal{E}_r+\Gamma_r+W_r}$ in steps 1-4 of Algorithm \ref{alg:arnoldi} using a set of interpolation points $\{\sigma_1,\dots,\sigma_r\}$ and corresponding right tangential directions $\{b_1,\dots,b_r\}$, both closed under complex conjugation. Let $F\in\R^{n_2\times m}$ be a solution to
		\begin{equation*}
			F^\T\bar{V}_2 = [b_1,\ldots,b_r]T_v,
		\end{equation*}
		with $T_v\in\C^{r\times r}$ such that $\bar{V}_2 = V_2T_v$. If the system matrix $\rRos$ is perturbed such that
		\begin{equation*}
			\widehat{\Ros}_r = s\mathcal{E}_r + \widehat{\Gamma}_r + \widehat{W}_r,
		\end{equation*}		
		with
		\begin{alignat}{4}
			\widehat{\Gamma}_r &= \mat{-\perJ & -\perG \\ \perG^\T & \perN} &&= \Gamma_r &&+ \mat{\bar{V}_2^\T F\\-I_m}\Delta_{\pN}&&\mat{\bar{V}_2^\T F\\-I_m}^\T,\\
			\widehat{W}_r&=\mat{\perR & \perP \\ \perP^\T & \perS} &&= W_r &&+  \mat{\bar{V}_2^\T F\\-I_m}\Delta_{\pS}&&\mat{\bar{V}_2^\T F\\-I_m}^\T,
		\end{alignat}
		then the perturbed ROM $\widehat{\pHsys}_r$ with transfer function $\widehat{\pHtf}_r$ obtained by steps 5-10 in Algorithm \ref{alg:arnoldi} is a pH-DAE system, and it holds that
		\begin{equation}\label{eq:mom_match_ihaPH}
			\pHtf(\sigma_i)b_i = \pHtfr(\sigma_i)b_i = \widehat{\pHtf}_r(\sigma_i)b_i,
		\end{equation}
		for all $i=1,\ldots,r$ and for any $\Delta_N = -\Delta_N^\T$ and $\Delta_S = \Delta_S^\T \geq 0$.
	\end{lemma}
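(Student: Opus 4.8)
The plan is to establish the two assertions of the lemma separately.

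\emph{Structure preservation.} This is bookkeeping of the block congruence that defines the perturbation. Since $\Delta_N=-\Delta_N^\T$, the matrix $\mat{\bar{V}_2^\T F\\-I_m}\Delta_N\mat{\bar{V}_2^\T F\\-I_m}^\T$ added to $\Gamma_r$ is skew-symmetric (the congruence of a skew-symmetric matrix), so $\widehat{\Gamma}_r=-\widehat{\Gamma}_r^\T$; since $\Delta_S=\Delta_S^\T\geq 0$, the matrix added to $W_r$ is the congruence of a symmetric positive semi-definite matrix and hence symmetric positive semi-definite, so $\widehat{W}_r=\widehat{W}_r^\T\geq 0$; and $\mathcal{E}_r$ --- in particular its improper block $\Dinf$ --- is untouched. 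Thus $\widehat{\Ros}_r=s\mathcal{E}_r+\widehat{\Gamma}_r+\widehat{W}_r$ is again a reduced pH-DAE system matrix with unchanged improper part, and the construction of Section~\ref{subsec:compose_ROM} (steps~5--10 of Algorithm~\ref{alg:arnoldi}) returns a genuine pH-DAE $\widehat{\pHsys}_r$ from it.

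\emph{Moment matching.} The first equality in \eqref{eq:mom_match_ihaPH} is the interpolation property of the unperturbed ROM established in the Theorem of Section~\ref{subsec:mor}. For the second, I would first read off from the block congruence that the proper part of $\widehat{\Ros}_r$ is $(\rprE,\ A_r-K\Delta K^\T,\ B_r+K\Delta,\ C_r-\Delta K^\T,\ D_r+\Delta)$, where $K:=\bar{V}_2^\T F$, $\Delta:=\Delta_N+\Delta_S$ and $A_r:=\rprJ-\rprR$, $B_r:=\rprG-\rprP$, $C_r:=(\rprG+\rprP)^\T$, $D_r:=\rprS+\rprN$ is the proper part of the \emph{unperturbed} ROM, which by the Theorem of Section~\ref{subsec:mor} is the Galerkin projection onto $\range(\bar{V}_2)$ of the proper subsystem of Lemma~\ref{lem:prop_decomp}; moreover the improper part $s\Dinf$ of $\widehat{\pHtf}_r$ coincides with that of $\pHtfr$. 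The crux is the identity $K^\T w_i=b_i$, where $w_i$ is the coordinate vector, in the basis $\bar{V}_2$, of the proper tangential Krylov direction $(\sigma_i\prE-\prA)^{-1}\prB b_i$. To get it, I would use that the second block of the \emph{full} Krylov vector $(\sigma_i\pE-\pA)^{-1}\pB b_i$ equals the proper Krylov vector $(\sigma_i\prE-\prA)^{-1}\prB b_i$ --- this follows from the explicit shape of $\Ttwo$ in Section~\ref{subsec:trafo}, whose block $M$ has second block row $[\,0\ \ I_{n_2}\ \ 0\ \ 0\,]$ and whose block $Y$ has a vanishing second block row, so that the second component of the transformed state in \eqref{eq:sse_trafo} is left unchanged. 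Hence the columns of $V_2$ are exactly the proper Krylov vectors, so with $\bar{V}_2=V_2T_v$ one has $[w_1,\dots,w_r]=T_v^{-1}$, and the defining relation $F^\T\bar{V}_2=[b_1,\dots,b_r]T_v$ gives $K^\T=F^\T\bar{V}_2=[b_1,\dots,b_r]T_v$, whence $K^\T w_i=b_i$. Since, by construction of the unperturbed ROM, $(\sigma_i\rprE-A_r)w_i=B_rb_i$ and $C_rw_i+D_rb_i=\pHtfr(\sigma_i)b_i-\sigma_i\Dinf b_i$, a one-line Sherman--Morrison-type check gives $(\sigma_i\rprE-(A_r-K\Delta K^\T))w_i=B_rb_i+K\Delta(K^\T w_i)=(B_r+K\Delta)b_i$, so that $w_i$ is again the reduced state at $\sigma_i$ (provided $\sigma_i\rprE-\widehat{A}_r$ is nonsingular, which I keep as a standing assumption so that $\widehat{\pHtf}_r$ is defined there), and
\begin{equation*}
	\widehat{\pHtf}_r(\sigma_i)b_i=(C_r-\Delta K^\T)w_i+(D_r+\Delta)b_i+\sigma_i\Dinf b_i=C_rw_i+D_rb_i+\sigma_i\Dinf b_i=\pHtfr(\sigma_i)b_i,
\end{equation*}
using $K^\T w_i=b_i$ once more to cancel the feedthrough corrections $-\Delta K^\T w_i$ and $+\Delta b_i$.

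I expect the structure-preservation part and the final resolvent cancellation to be routine. The step that needs care is the identity $K^\T w_i=b_i$ --- i.e.\ making precise that $V_2$ (equivalently $\bar{V}_2$) spans the tangential Krylov space of the proper subsystem --- which rests on the trivial second block rows of $M$ and $Y$ in $\Ttwo$ and is in fact already implicit in the proof of the Theorem of Section~\ref{subsec:mor}. Once this is settled, the remainder is the classical feedthrough-compensation argument of \cite{Beattie2009}, transcribed block-wise onto the Rosenbrock system matrix.
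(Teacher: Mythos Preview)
Your proposal is correct and takes essentially the same approach as the paper: the paper merely notes that structure preservation is immediate from the properties of $\Delta_N,\Delta_S$ and defers the interpolation claim to \cite[Theorem~3]{Beattie2009}, whose feedthrough-compensation argument you have spelled out in full for the present pH-DAE setting. Your identification of the key identity $K^\T w_i=b_i$ via the trivial second block rows of $M$ and $Y$ in $\Ttwo$ (so that the second block of the full Krylov vector equals the proper one) is precisely the adaptation needed to carry the Beattie--Gugercin argument over to this framework.
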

	\begin{proof}
		The fact that the perturbed system $\widehat{\pHsys}_r$ fulfils the pH structural constraints follows directly from the properties of $\Delta_N$ and $\Delta_S$. The proof of \eqref{eq:mom_match_ihaPH} follows the proof of Theorem 3 in \cite{Beattie2009} for general LTI systems and is therefore omitted here.  
	\end{proof}
	This result enables us to optimize the new degrees of freedom $\Delta_N, \Delta_S$ in an $\hinf$-inspired way. Optimizing the $\hinf$ error directly is challenging since $\hinf$ norm computations are computationally taxing, and the $\hinf$ norm depends non-smoothly on $\Delta_N, \Delta_S$ (see \cite[Section 3.2.1]{Schwerdtner2020}). Instead, the SOBMOR algorithm, proposed in \cite{Schwerdtner2020}, may be employed. Therein, the functions $\vtu(\cdot)$ and $\vtsu(\cdot)$ are introduced, which map vectors row-wise to appropriately sized upper triangular and strictly upper triangular matrices, respectively. The function names are abbreviations for \emph{vector-to-upper} and \emph{vector-to-strictly-upper}, respectively. Using these functions, we may define parameter vectors $\theta_N\in\R^{m(m-1)/2}$ and $\theta_S\in\R^{m(m+1)/2}$ and design $\Delta_N, \Delta_S$ in the following way:
	\begin{align*}
		\Delta_N(\theta_N) &:= \vtsu(\theta_N)^\T - \vtsu(\theta_N), \\
		\Delta_S(\theta_S) &:= \vtu(\theta_S)^\T\vtu(\theta_S). \\		
	\end{align*}
	Finally, a levelled least-squares approach can be taken to optimize the error ${\Vert\pHtf-\widehat{\pHtf}_r(\cdot,\theta)\Vert_{\hinf}}$ with ${\theta^\T:=[\theta_N^\T,\theta_S^\T]^\T\in\R^{\dimu^2}}$, as described in \cite{Schwerdtner2020,Schwerdtner2021Ident}, and to which we refer for implementation details. We summarize the approach in Algorithm \ref{alg:ihaPH}.
	
	\subsection{Suitable representations for MOR}\label{subsec:kyp}
	Since we apply Galerkin projections to preserve the pH structure, transformations of the FOM under s.s.e. will not change its transfer function, but they will have an impact on MOR, which raises the question of how to find suitable representations of the FOM that yield better approximations. This was recently examined in \cite{BreU21} for explicit ODE systems, and may be incorporated into our framework as follows. 

	\begin{algorithm}[tpb]\caption{IHA-PH for pH-DAEs (based on \cite{Flagg2013})}\label{alg:ihaPH}
		\LinesNumbered
		\SetAlgoLined
		\DontPrintSemicolon
		\SetKwInOut{Input}{Input}\SetKwInOut{Output}{Output}
		\Input{Large-scale pH-DAE $\pHsys$ in staircase form; set of interpolation points $\{\sigma_1,\dots,\sigma_r\}$ and corresponding right tangential directions $\{b_1,\dots,b_r\}$ (both closed under complex conjugation); initial parameter vector $\theta\in\R^{\dimu^2}$.
		}
		\Output{Perturbed reduced pH-DAE $\widehat{\pHsys}_r$\;}
		Compute the reduced system matrix $\rRos$ with steps 1-5 in Algorithm \ref{alg:irkaPH}. \;			
		Solve \textbf{$\theta^* = {\rm arg}\, \min_{\theta\in\R^{m^2}} \Vert\pHtf-\widehat{\pHtf}_r(\cdot,\theta)\Vert_{\hinf}$} using the approach in \cite{Schwerdtner2020,Schwerdtner2021Ident} \;
		Compute $\widehat{\Ros}_r(\theta^*)$ as in Lemma \ref{lem:ihaPH}.\;	
		Construct $\widehat{\pHsys}_r(\theta^*)$ with steps 5-10 in Algorithm \ref{alg:arnoldi}.\;
	\end{algorithm}		

	Assume that we have computed the proper system matrix
	\begin{equation*}
		\prRos(s) = \mat{s\prE-\prA&-\prB\\\prC&\prD}, 
	\end{equation*}
	of the FOM as in Section \ref{subsec:trafo}. In Lemma \ref{lem:prop_decomp}, we obtained a pH representation of this subsystem by simply decomposing the system matrix into symmetric and skew-symmetric parts. However, there are other ways, and the family of pH representations for this system is parameterized by the \emph{Kalman-Yakubovich-Popov (KYP) inequality}, as shown in \cite{Beattie2019}. If the proper system is \emph{behaviourally observable}, i.e., ${\text{rank}[(s\prE - \prA)^\T, \prC^\T] = n_2}$ for all $s \in \C$, which we assume in the following, the KYP inequality	
	\begin{equation}\label{eq:KYP}
		\mat{-\prA^\T X - X^\T \prA & \prC^\T - X^\T \prB\\ \prC - \prB^\T X & \prD + \prD^\T}\geq 0,\quad X^\T\prE = \prE^\T X \geq 0,
	\end{equation}
	has symmetric solutions $X\in\R^{n\times n}$ that are bounded such that
	\begin{equation*}
		0 < X_- \leq X \leq X_+
	\end{equation*}
	with minimal and maximal solutions $X_-$ and $X_+$, respectively (see \cite[Theorem 1]{Beattie2019}). Here, $X_-\leq X$ reflects that $X - X_-$ is positive semi-definite. Now assume that we apply another transformation under s.s.e. on $\prRos$ using any $X$ in the following way
	\begin{equation*}
		\tRos_{\rm p} = \mat{X^\T & 0 \\ 0 & I_m} \prRos \mat{I_{n_2} & 0 \\ 0 & I_m},
	\end{equation*}
	then we obtain the decomposition
	\begin{equation*}
		\tRos_{\rm p} = s\widetilde{\mathcal{E}}_{\rm p}+\tGamma_{\rm p}+\tW_{\rm p},
	\end{equation*}
	with matrices 
	\begin{alignat}{1}
		\widetilde{\mathcal{E}}_{\rm p} &= \mat{X^\T \prE & 0 \\ 0 & 0}, \\
		\tGamma_{\rm p} &= \mat{\prA^\T X - X^\T \prA & -(\prC^\T + X^\T \prB)\\ \prC + \prB^\T X & \prD - \prD^\T},\\
		\tW_{\rm p} &= \mat{-\prA^\T X - X^\T \prA & \prC^\T - X^\T \prB\\ \prC - \prB^\T X & \prD + \prD^\T},
	\end{alignat}	
	which clearly fulfils the pH structural constraints due to \eqref{eq:KYP}. The Hamiltonian of the system associated with $\tRos_{\rm p}$ changes to
	\begin{equation*}
		\tra{\mathcal{H}}_{\rm p}(\tra{x}_{\rm p}) = \frac{1}{2}\tra{x}_{\rm p}^\T X^\T\prE \tra{x}_{\rm p}.
	\end{equation*}
	where $\tra{x}_{\rm p}$ denotes the new state vector of the transformed proper subsystem. As discussed in \cite{BreU21}, choosing the minimal solution $X_-$ for the transformation is particularly suited for MOR and may significantly improve the approximation quality. In our framework, we can include this transformation change by simply replacing $\We$ in step 3 of Algorithm \ref{alg:arnoldi} by
	\begin{equation*}
		\We_- = \mat{0&A_{14}^{-\T}C_4^\T\\ X_-\bar{V}_2& 0\\-A_{33}^{-\T}A_{23}^\T X_-\bar{V}_2 & 	A_{33}^{-\T}(C_3^\T-A_{13}^\T A_{14}^{-\T} C_4^\T)\\0&0\\0&I_m}.
	\end{equation*}	
	Note that this does not affect the tangential interpolation conditions since we retain $\Ve$ and also does not affect the matrices $\rprS$, $\rprN$ or $\Dinf$ since we keep the second block column of $\We$ unchanged. However, it does indeed have an effect on the strictly proper part $\pHtf_{\rm sp,r}$ of the ROM's transfer function \emph{between} the interpolation points, which is illustrated in the next section by numerical examples. Note that, especially if the matrices $\prE, \prA$ are very large or even dense (see Example \ref{ex:sparsity}), the solution of \eqref{eq:KYP} may be computationally taxing. Therefore, efficient low-rank and/or sparse approximations of $X_-$ are required that ideally originate from the sparse FOM matrices, which is an open research problem. 
	
	\section{Numerical experiments} \label{sec:num_examples}
	A practical example where pH-DAE models naturally occur is the modelling of electrical RCL networks consisting of linear resistors, capacitors and inductors, which are used, for example, in the simulation of VLSI circuits or transmission lines. RCL networks are typically modeled using modified nodal analysis (MNA), which is also used by simulation software such as \emph{SPICE} (see, e.g., \cite{Freund2011}). When large-scale RCL networks are reduced, it is crucial that the ROM retains the passivity property of the original model to couple the ROM with other (possibly non-linear) parts of the system. Interpolatory reduction methods that preserve the passivity and MNA structure of the original model are given by the PRIMA \cite{Odabasioglu1998} and SPRIM \cite{Freund2011} algorithms, respectively.
	Since RCL models may be transformed into pH-DAE staircase form, and since pH-DAE models are inherently passive, we may also preserve the passivity property with the algorithms presented in this work.
	
	We consider two RCL ladder networks \texttt{RCL-1M}\footnote{The FOM system matrices of \texttt{RCL-1M} are available at \url{https://doi.org/10.5281/zenodo.6497076}} and \texttt{RCL-12S}\footnote{The FOM system matrices of \texttt{RCL-12S} are available at \url{https://doi.org/10.5281/zenodo.6602125}}, which were also used in \cite{MosSMV22} and \cite{MosSMV2022b}, respectively, and are generated with the port-Hamiltonian benchmark collection\footnote{\url{https://port-hamiltonian.io}} to which we refer for a more detailed physical description. The model specifications and sigma plots are given in Figure \ref{fig:foms}. \texttt{RCL-1M} is a large-scale model with two input-output pairs and differentiation index 1, which generates an approximately constant input-output gain for higher frequencies. Model \texttt{RCL-12S} has significantly smaller dimensions and only one input and output but contains index-1 and index-2 parts, which leads to an improper transfer function with $D_\infty\neq 0$. 
	
	Let us first consider the reduction of \texttt{RCL-1M}. We reduce the original model to dimension $r=40$ using the algorithms IRKA-PH, TRKSM-PH and IHA-PH. The sigma plots of the resulting ROM transfer functions and respective error systems $\pHtf-\pHtfr$ are plotted in Figure \ref{fig:res_RCL-1M}. For this example, the proper matrices $\prE,\,\prA$ are sparse, and therefore, the residual $\zeta$ in TRKSM-PH can be evaluated very efficiently. This makes TRKSM-PH a computationally efficient alternative to IRKA-PH since it yields a comparable performance while requiring significantly fewer solutions to large-scale linear systems of equations. The additional degrees of freedom in IHA-PH, on the other hand, enable more accurate results in small frequency regions at the expense of a constant error gain at higher frequencies, which results from the perturbation of the reduced feedthrough matrix.
	
	For model \texttt{RCL-12S}, we compute ROMs of different dimensions ranging from ${r=2}$ to ${r=20}$. The $\htwo$ and $\hinf$ errors are plotted in Figure \ref{fig:res_RCL-12S} for each MOR method presented in Section \ref{sec:extensions}. For all methods, the errors decrease for increasing reduced orders $r$, which is expected since more interpolation conditions can be enforced. TRKSM-PH again yields similar $\htwo$ errors as IRKA-PH for larger ROM dimensions. For IHA-PH, only the $\hinf$ errors are plotted in Figure \ref{fig:res_RCL-12S} since it produces unbounded $\htwo$ errors due to the perturbation of the feedthrough matrix. For most reduced orders, the $\hinf$ errors are only marginally smaller than those produced by IRKA-PH since the model has only one input-output pair and consequently, $\theta\in\R$. However, as shown for $r=18$, even one additional parameter may improve the $\hinf$ approximation quality. Finally, we also illustrate the importance of choosing a suitable reduction matrix $\We$. Replacing the matrix $\We$ in IRKA-PH by $\We_-$, as described in Section \ref{subsec:kyp}, yields significantly smaller errors both in the $\htwo$ and $\hinf$ norm. Note that a similar basis change could, of course, also be applied to the IHA-PH and TRKSM-PH methods which is expected to yield similar improvements but is omitted here.
	
\begin{figure}[tpb]%
	\centering
	\begin{tabular}{cc}
		\includegraphics[width=0.45\textwidth]{./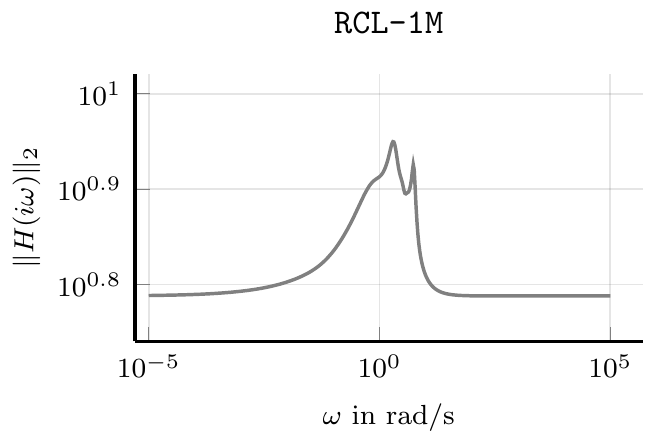} & \includegraphics[width=0.45\textwidth]{./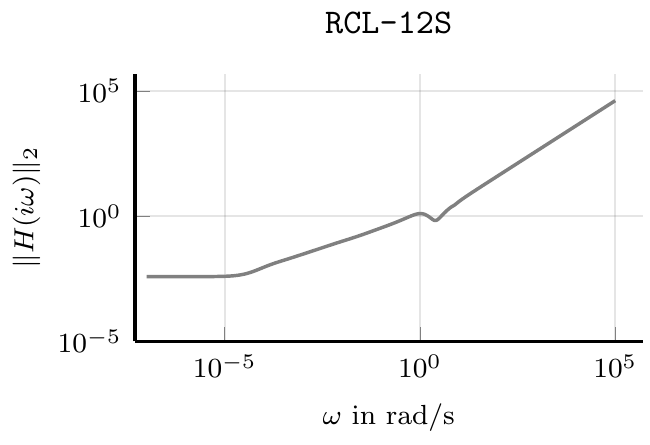}
	\end{tabular}	
		\subfloat{
		\begin{tabular}{ |c||c|c|c|c|c|c|c| } 
			\hline
			Model name & $\nu$ & $n_2$ & $n_3$ & $n_4/n_1$ & $\Dinf$ & $n$ & $m$ \\
			\hline 
			\hline
			\texttt{RCL-1M} & $1$ & $19\,999$ & $10\,005$ & $0$ & $0$ & $30\,004$ & $2$\\
			\hline 
			\texttt{RCL-12S} & $2$ & $999$& $501$ & $1$ & $\neq 0$ & $1\,502$ & $1$\\
			\hline 
		\end{tabular}
	} 
	\caption{Model parameters and sigma plots for the RCL ladder network models \texttt{RCL-1M} and \texttt{RCL-12S} in pH-DAE staircase form.}
	\label{fig:foms}
\end{figure}

\begin{figure}[tbh]%
	\centering
	\includegraphics[width=0.9\textwidth]{./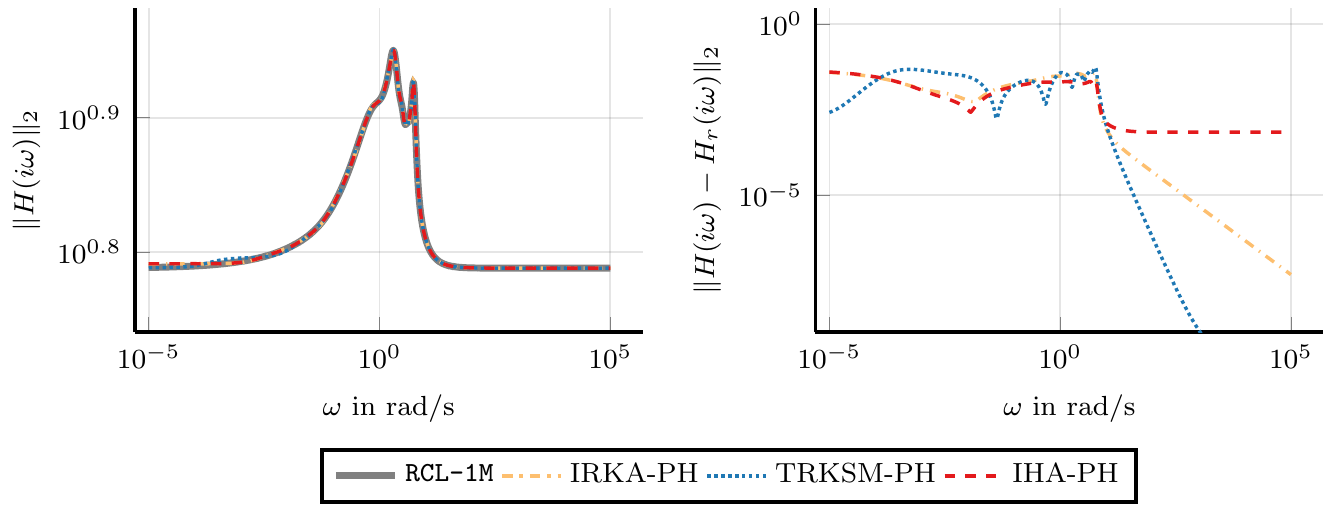}
	\caption{Reduction of the model \texttt{RCL-1M} to order ${r=40}$ using different interpolatory MOR methods. Given are the sigma plots of the FOM and ROMs (left) and of the respective error systems (right).}
	\label{fig:res_RCL-1M}
\end{figure}
\begin{figure}[tbh]%
	\centering
	\includegraphics[width=0.9\textwidth]{./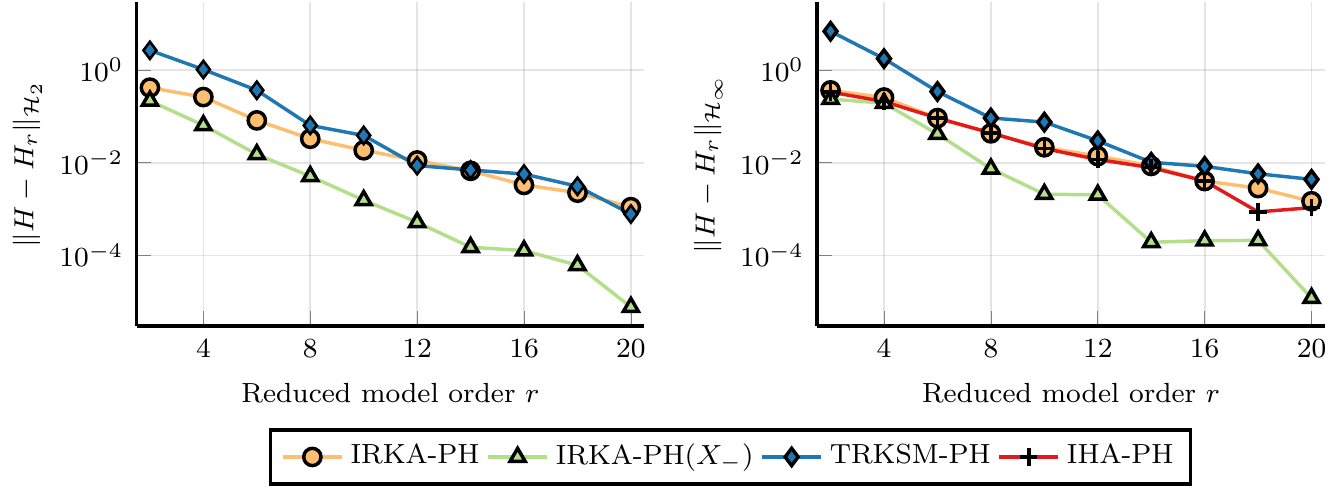}
	
	\caption{Reduction of the model \texttt{RCL-12S} to different reduced orders $r\in\{2,4,\ldots,20\}$. Plotted are the $\htwo$ errors (left) and $\hinf$ errors (right) for different interpolatory MOR methods.}
	\label{fig:res_RCL-12S}
\end{figure}

	\section{Conclusion} \label{sec:conclusion}
	The Rosenbrock system matrix exhibits a particular structure for pH-DAE systems that can be exploited for model reduction. We have deduced a novel interpolatory MOR framework for pH-DAEs in staircase form, which, compared to other tangential interpolation methods, guarantees minimal ROMs in pH-DAE staircase form irrespective of the original system's differentiation index. Moreover, its simple structure allows the incorporation of different strategies for choosing suitable interpolation data which were originally proposed for unstructured DAE systems. In electrical circuit simulation where pH-DAE models naturally arise, our framework can be considered as an alternative to traditional, passivity-preserving MOR methods, which we illustrated using two numerical examples.

	\section*{Acknowledgements}
	This research has been funded by the Deutsche Forschungsgemeinschaft (DFG, German Research Foundation) - Project number 418612884. We also sincerely appreciate the help of Maximilian Bonauer, Nora Reinbold and Paul Schwerdtner with the implementation of the algorithms in Section \ref{sec:extensions}.
	
	\bibliography{references}
	
\end{document}